\newcolumntype{C}{>{\centering\arraybackslash}X}
\newcolumntype{L}{>{\raggedright\arraybackslash}X}
\newcolumntype{R}{>{\raggedleft\arraybackslash}X}
\g@addto@macro\bfseries{\boldmath}
\pgfplotsset{compat=newest}
\title[Synergistic Traffic Assignment]{Synergistic Traffic Assignment}
\author{Thomas Bläsius}
\affiliation{
  \institution{Karlsruhe Institute of Technology}
  \city{Karlsruhe}
  \country{Germany}}
\email{thomas.blaesius@kit.edu}
\author{Adrian Feilhauer}
\affiliation{
  \institution{Karlsruhe Institute of Technology}
  \city{Karlsruhe}
  \country{Germany}}
\email{adrian.feilhauer@kit.edu}
\author{Markus Jung}
\affiliation{
  \institution{Karlsruhe Institute of Technology}
  \city{Karlsruhe}
  \country{Germany}}
\email{markus.jung@student.kit.edu}
\author{Moritz Laupichler}
\affiliation{
  \institution{Karlsruhe Institute of Technology}
  \city{Karlsruhe}
  \country{Germany}}
\email{moritz.laupichler@kit.edu}
\author{Peter Sanders}
\affiliation{
  \institution{Karlsruhe Institute of Technology}
  \city{Karlsruhe}
  \country{Germany}}
\email{sanders@kit.edu}
\author{Michael Zündorf}
\affiliation{
  \institution{Karlsruhe Institute of Technology}
  \city{Karlsruhe}
  \country{Germany}}
\email{michael.zuendorf@alumni.kit.edu}
\begin{abstract}
Traffic assignment analyzes traffic flows in road networks that emerge due to traveler interaction.
Traditionally, travelers are assumed to use private cars, so road costs grow with the number of users due to congestion.
However, in sustainable transit systems, travelers share vehicles s.t. more users on a road lead to higher sharing potential and reduced cost per user.
Thus, we invert the usual \emph{avoidant traffic assignment} (ATA) and instead consider \emph{synergistic traffic assignment} (STA) where road costs \emph{decrease} with use.

We find that STA is significantly different from ATA from a game-theoretical point of view.
We show that a simple iterative best-response method with simultaneous updates converges to an equilibrium state.
This enables efficient computation of equilibria using optimized speedup techniques for shortest-path queries.
In contrast, ATA requires slower sequential updates or more complicated iteration schemes that only approximate an equilibrium.
Experiments with a realistic scenario for the city of Stuttgart indicate that STA indeed quickly converges to an equilibrium.
We envision STA as a part of \emph{software-defined transportation} systems that dynamically adapt to current travel demand.
As a first demonstration, we show that an STA equilibrium can be used to incorporate traveler synergism in a simple bus line planning algorithm to potentially greatly reduce the required vehicle resources.  
\end{abstract}
\keywords{%
traffic assignment%
; practical applications of game theory%
; emergent behavior%
; equilibrium concepts%
; software-defined mobility%
}
\newcommand{\parheader}[1]{\paragraph*{#1}}
\DeclareMathOperator{\cost}{cost}
\DeclareMathOperator{\distance}{dist}
\newcommand{\lineLength}[1]{\ensuremath{\delta_{#1}}}
\newcommand{\lineVehTT}[1]{\ensuremath{T_{#1}}}
\newcommand{\oset}[2]{%
  {\mathop{#2}\limits^{\vbox to 0.5\ex@{\kern-\tw@\ex@
   \hbox{\small #1}\vss}}}}
\begin{document}

\pagestyle{fancy}


\maketitle 


\section{Introduction}
Passenger transportation is in a crisis. 
Established systems often predominantly rely on the use of private cars, which leads to congestion, pollution, and bad usage of space in urban areas.
Traditional, statically planned public transportation is expensive, slow and inflexible. 
Thus, there is a lot of interest in more flexible systems that adapt to actual demand using a carefully orchestrated combination of individual transportation, shared taxis, and larger shared vehicles.
We are interested in evaluating the potential of such \emph{software-defined transportation} systems.

To this end, previous work evaluated the usage of large fleets of shared taxis in urban areas~\cite{buchhold2021fast,alonsomora2017ondemand,ho2018survey}. 
This alone has limited sharing potential which does not suffice to service the traveler demand in entire metropolitan areas.
Using walking/bicycles/scooters to reach pickup and dropoff points improves upon shared taxis but the overall savings are moderate (about 20\% less energy and slightly improved travel times)~\cite{laupichler2024fast,fielbaum2021ondemand}.
Thus, an essential component of efficient software-defined transportation is a way to identify roads with high sharing potential solely based on the (previously measured or simulated) demands of travelers (\emph{agents}). 
We model this using a \emph{traffic assignment} congestion game where agents' strategies are paths in a road network (a graph), resulting in loads on road segments (edges -- the resources of the game). 
We are interested in \emph{equilibria}, i.e., strategies (path choices) for each agent such that no agent can reduce their cost by unilaterally changing their path.

Traffic assignment has been intensively studied but so far exclusively for the \emph{avoidant} case (ATA) where costs increase with load due to congestion.
This is important when considering individual transportation.
For our application, load acts as a measure of sharing potential, which means costs \emph{decrease} with load.
We call this \emph{synergistic traffic assignment} (STA).
The focus of this paper is to analyze best response for STA, where agents react to the current situation with a strategy that is optimal for themselves.
We are interested in whether this converges to an equilibrium, with the ultimate goal to efficiently compute such a stable state.
Additionally, in \Cref{a:system_optima} we briefly discuss properties of system optima for STA and show that finding them is NP-hard, making them unsuitable for quick computation.

Motivated by the fact that shortest-path computations can be substantially accelerated if the edge costs do not change too often, we consider different best-response variants for STA.
In \emph{sequential} best response, each step consists of only one agent changing their strategy.
In \emph{simultaneous} best response, all agents choose their new strategy for the next iteration at the same time.
As an additional variation orthogonal to this, we distinguish between \emph{impact-aware} or \emph{impact-blind} best response.
In the impact-aware variant (typically considered in game theory), each agent takes their own impact on the cost function into account when considering to change their strategy.
In the impact-blind variant, agents make their decision on the current cost of edges, not on the cost after their own change.
We note that this makes little difference in practical scenarios with many agents where the impact of each individual agent on the cost is negligible.  
This yields four variants for STA for all combinations of sequential vs.\ simultaneous and impact-aware vs.\ impact blind; see \Cref{s:model} for formal definitions and \Cref{tab:variants_matrix} for an overview.

For the impact-aware case, it follows from the more general setting of congestion games (CG) that sequential best response converges to an equilibrium~\cite{rosenthal1973class}.
For simultaneous best response, however, we observe that there can be best-response cycles.
We additionally give an example of a best-response cycle in the case where a group of only two equivalent agents change their strategy simultaneously; see \Cref{sec:best-response-cycles}.
For the impact-blind case, we show that best response converges to an equilibrium, even in the simultaneous setting; see \Cref{sec:conv-best-resp}.
This difference between impact-blind and impact-aware is somewhat surprising, as they seem similar.
Moreover, it is in stark contrast to ATA, where impact-blindness may cause best-response cycles even in the sequential setting.

Beyond the theoretical result, the convergence of the simultaneous and impact-blind variant allows fast computation of an equilibrium.
In each round, all agents change simultaneously and base their choice on the same edge costs.
Thus, we can use sophisticated preprocessing techniques for path planning that are several orders of magnitude faster than individual applications of Dijkstra's algorithm; see \Cref{s:staAlgo}.
In contrast, ATA has to use expensive agent-by-agent updates or more complex iteration schemes that can use simultaneous updates but only approximate an equilibrium.

In \Cref{sec:experimental_evaluation}, we perform an experimental evaluation using realistic demand data for Stuttgart, Germany. 
Fewer than 20 iterations suffice to find an equilibrium. 
In contrast, established ATA algorithms are considered far from an equilibrium at this number of iterations~\cite{buchhold2019real,perederieieva2015framework}. 
Moreover, the employed shortest-path speedup technique of \emph{customizable contraction hierarchies} benefits from the fact that STA reinforces a hierarchy of roads, leading to faster queries in later iterations of STA.
Using the right parametrization of the cost function, we find routes that have only moderately (less than $25\%$) longer travel times than free-flow shortest paths while attaining considerably (more than two times) higher sharing potential.
We also look at a simple model for deriving bus lines from traffic flows.
We find that basing the bus lines on an STA equilibrium rather than free-flow traffic reduces the total vehicle operation time significantly.
Though the model is too simplistic to draw solid conclusions for real-world settings, it shows that STA can be a useful tool for such applications.

\parheader{Summary of Contributions}
\begin{itemize}
\item Introduction of synergistic traffic assignment (STA) as a way to extract sharing potential for
  a set of traveler demands
\item Theoretical analysis of the best-response process for different variants of STA
\item Practically efficient algorithm for finding an STA equilibrium based on the theoretical insights
\item Prototypical demonstration that STA equilibria identify high sharing potential in realistic inputs with uses in line planning
\end{itemize}
    
    \section{Related Work}
    To the best of our knowledge, there is no prior work on synergistic traffic assignment, i.e., traffic assignment with decreasing edge cost functions.
    Thus, here we consider research on avoidant traffic assignment (ATA) that may also be relevant for synergistic traffic assignment.
    ATA is a well-studied problem with decades of research in the operations and traffic planning communities.
    Additionally, ATA is often considered in game theory, as a prototypical example of a non-cooperative congestion game in pure strategies~\cite{Owen2013,rosenthal1973class}.

    \parheader{User Equilibrium}
    \citet{wardrop1952some} states that travelers in a road network naturally act selfishly, changing their path if a different path with a better travel time for themselves exists, even to the detriment of others.
    An assignment in which no traveler can change their path to their benefit is called a \emph{user equilibrium (UE)}.
    \citet{beckmann1956studies} find that a UE always exists for ATA, which matches a result from game theory about the existence of pure strategy Nash equilibria in congestion games~\cite{rosenthal1973class}.
    As traffic without central control naturally tends to a UE, it is an important issue of traffic analysis to compute UEs for a given network and traveler demand.

    \parheader{Simple Approaches to Finding a UE}
    A simple approach to finding a UE is a process called iterated \emph{sequential best response}, in which one traveler at a time may change their path while other travelers' paths are kept fixed.
    Edge costs are updated according to the change of one traveler before continuing with the next traveler.
    Sequential best response always converges to a UE~\cite{rosenthal1973class,Monderer1996}.
    However, the number of iterations until convergence can be exponential in the number of travelers~\cite{Durand2016, Fabrikant2004}.

    In iterated \emph{simultaneous best response}, each traveler chooses their next path simultaneously, while edge costs are kept fixed.
    Then, all paths are changed at once, before updating edge costs and repeating the process.
    While the simultaneous method needs fewer edge cost updates than sequential best response, it does not always converge.
    Instead, it may encounter best-response cycles where travelers revert to a previous state in a cyclical manner~\cite{sheffi1985urban}.
    Thus, simultaneous best response is also not suited to find a UE for ATA.

    \parheader{Finding a UE in Practice}
    In practice, more sophisticated methods find a UE by applying convex optimization to \emph{Beckmann's transformation}~\cite{beckmann1956studies} of ATA into a convex program whose minimum is a UE.
    A usual categorization~\cite{florian1995chapter,zhou2010computational,perederieieva2015framework} divides these algorithms into \emph{link-based}~\cite{frank1956algorithm,sheffi1985urban}, \emph{path-based}~\cite{dafermos1968traffic,jayakrishnan1994faster,florian2011new,kumar2011improved}, and \emph{bush-based}~\cite{bargera2002origin,dial2006path,nie2010class} approaches.
    These algorithms perform simultaneous path updates but they utilize the convexity of the problem to ensure progress towards the UE and avoid best-response cycles. 
    Note two limitations common to these approaches:
    First, travelers are treated as non-atomic, i.e., the load of a single O-D pair can be split among multiple paths.
    Second, the algorithms iteratively approach the UE in increasingly smaller steps but do not actually reach it.
    
    Each algorithm mentioned here requires an edge cost update and a re-computation of shortest paths in every iteration.
    The overhead for these updates can be expected to dominate the total running time. 
    \citet{buchhold2019real} show that a recent shortest-path speedup techniques called \emph{customizable contraction hierarchies (CCH)} can adequately reduce running times for cost updates and shortest-path queries to tenths of seconds per iteration for millions of O-D pairs. 

\section{Models and Notation}\label{s:model}
We define traffic assignment as a congestion game.
Let $[k] = \{1, \dots, k\}$ be a set of $k$ agents.
In a \emph{congestion game} there is a finite set $E$ of available \emph{resources} that are used by the agents.
Each agent $i \in [k]$ has a set of \emph{possible strategies} $P_i$ where each strategy $p \in P_i$ is a subset of resources, i.e., $p \subseteq E$.
A \emph{strategy profile} is a vector $S = (p_1, \dots, p_k)$ of strategies such that $p_i \in P_i$ for $i \in [k]$.
The \emph{load} $\ell_e(S)$ of a resource $e \in E$ with respect to a strategy profile $S$ is the number of agents using $e$, i.e., $\ell_e(S) = | \{p_i \in S \mid e \in p_i\}|$.
Each resource $e \in E$ has a \emph{cost} function $c_e \colon \mathbb N \to \mathbb R$, where $c_e(\ell)$ is the cost of resource $e$ given that it has load $\ell$.
The resulting cost for an agent is the sum of the costs of all resources used by the agent.
More formally, we define the \emph{cost} of a strategy $p \subseteq E$ given a strategy vector $S$ as $\cost(p, S) = \sum_{e \in p} c_e(\ell_e(S))$.
Thus, for $S = (p_1, \dots, p_k)$, the cost for agent $i \in [k]$ is $\cost(p_i, S)$.

With this, one can use the common definitions for the Nash equilibrium and best response.
Before defining them formally, we introduce the traffic assignment game, which also motivates less common variants of these concepts.
Colloquially speaking, in traffic assignment, each agent chooses a path in a graph that gets them from their origin to their destination, using the edges as resources.

More formally, let $G = (V, E)$ be a directed graph with a cost function $c_e \colon \mathbb N \to \mathbb R$ for each edge.
An \emph{origin--destination pair (O-D pair)} is a pair $(s, t) \in V \times V$ with $s \neq t$.
For $s, t \in V$, an \emph{$st$-path} is a set of edges $p \subseteq E$ such that the subgraph induced by $p$ contains $s$ and $t$, all vertices except $s$ have exactly one incoming edge, and all vertices except $t$ have exactly one outgoing edge.
For a set of O-D pairs $X = \{(s_1, t_1), \dots, (s_k, t_k)\}$, the \emph{traffic assignment game}, is a congestion game where the edge set $E$ forms the resources and the possible strategies for agent $i \in [k]$ are the $s_it_i$-paths, i.e., $P_i = \{p \subseteq E \mid p \text{ is an }s_it_i\text{-path}\}$.
A traffic assignment game is \emph{avoidant} if the cost functions $c_e$ are non-decreasing, i.e., if more traffic only increases cost.
Conversely, it is \emph{synergistic} if the cost functions $c_e$ are non-increasing.
We also use the terms avoidant and synergistic for the more general case of congestion games.

\subsection{Best Response}

Consider a congestion game and let $S = (p_1, \dots, p_k)$ be a strategy profile.
For an agent $i \in [k]$, we use $S_{-i}$ to denote the omission of $i$'s strategy.
Moreover, for any alternative strategy $p_i' \in P_i$ of agent $i$, we use $(S_{-i}, p_i')$ to denote the replacement of $p_i$ with $p_i'$ in $S$, i.e., $(S_{-i}, p_i') = (p_1, \dots, p_{i - 1}, p_i', p_{i + 1}, \dots, p_k)$.
Note that $S = (S_{-i}, p_i)$.

A \emph{best response} of agent $i$ to a strategy profile $S = (p_1, \dots, p_k)$ is a strategy $p_i' \in P_i$ that minimizes $\cost(p_i', (S_{-i}, p_i'))$, which is the cost for agent $i$ after they change their strategy from $p_i$ to $p_i'$.
We say that agent $i$ is \emph{content} with $S$ if $p_i$ is a best response to $S$.
If all agents are content with $S$, then $S$ is a \emph{Nash equilibrium}.
With \emph{sequential best response}, we refer to the process of changing the strategy of one agent at a time to their best response with respect to the current strategy profile.
If this reaches an equilibrium, we say that the process \emph{converges}.
Otherwise, we obtain a so-called \emph{best-response cycle}, in which the process indefinitely cycles through a repeating sequence of strategy profiles (these are the only two options as the state space is finite).
The following theorem by Rosenthal~\cite{rosenthal1973class} states that the sequential best response defined above converges.

\begin{theorem}[Rosenthal~\cite{rosenthal1973class}]
  \label{thm:rosenthal-congestion-games}
  In a congestion game, sequential best response converges to a Nash equilibrium.
\end{theorem}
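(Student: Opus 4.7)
The plan is to prove \Cref{thm:rosenthal-congestion-games} via Rosenthal's potential function argument. Concretely, I would define
\[
\Phi(S) = \sum_{e \in E} \sum_{\ell = 1}^{\ell_e(S)} c_e(\ell),
\]
with the convention that the inner sum is $0$ when $\ell_e(S) = 0$. The goal is then to show that $\Phi$ is an \emph{exact potential}: whenever a single agent $i$ switches from $p_i$ to $p_i'$, producing $S' = (S_{-i}, p_i')$, the change in $\Phi$ equals the change in that agent's cost. Given this property, the rest of the theorem is almost immediate.

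The main computation is to verify the potential identity
\[
\Phi(S') - \Phi(S) = \cost(p_i', S') - \cost(p_i, S).
\]
I would argue edge by edge. For $e \notin p_i \cup p_i'$, the load does not change and $e$ contributes $0$ to both sides. For $e \in p_i \setminus p_i'$, the load drops by one, so $\Phi$ loses $c_e(\ell_e(S))$, while agent $i$'s cost loses the term $c_e(\ell_e(S))$ as well. Symmetrically, for $e \in p_i' \setminus p_i$, the load increases by one, both sides gain $c_e(\ell_e(S')) = c_e(\ell_e(S) + 1)$. For $e \in p_i \cap p_i'$, the load is unchanged and the contribution to the cost for $i$ cancels. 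Summing over all edges yields the claimed equality.

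With the potential in hand, convergence follows directly. If agent $i$ performs a best response that strictly improves their cost, then $\Phi$ strictly decreases by the identity above. If no agent can strictly improve, then by definition every agent is content and $S$ is a Nash equilibrium. Since the strategy space $P_1 \times \dots \times P_k$ is finite, $\Phi$ takes only finitely many values, so no infinite sequence of strictly improving best responses is possible; sequential best response must reach a profile at which no further improvement exists, i.e., a Nash equilibrium.

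The only subtlety I anticipate is a bookkeeping one: making sure the edge-by-edge case analysis is stated cleanly enough that the cancellation of $c_e(\ell_e(S))$ terms on edges in $p_i \cap p_i'$ is transparent, and that the indexing $\ell = 1, \dots, \ell_e(S)$ in the definition of $\Phi$ correctly picks up exactly the new term $c_e(\ell_e(S) + 1)$ when the load increases. There is no real mathematical obstacle beyond this careful accounting; the theorem is a clean consequence of the potential function's existence, and the argument makes no use of monotonicity of the $c_e$, so it applies uniformly to both the avoidant and synergistic regimes considered in the paper.
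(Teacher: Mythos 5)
Your proof is correct and is exactly the classical exact-potential argument of Rosenthal that the paper cites for this theorem (and whose potential function $\phi_e(S)=\sum_{\ell=0}^{\ell_e(S)}c_e(\ell)$ the paper itself reuses in \Cref{sec:conv-best-resp}); the edge-by-edge verification of the potential identity and the termination argument via finiteness of the strategy space are both sound. No gaps.
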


As traffic assignment games are a special case of congestion games, \Cref{thm:rosenthal-congestion-games} also holds for our setting.
This is algorithmically useful as it provides a way to compute an equilibrium.
In each step, one has to compute a shortest path in $G$ for the O-D pair of the currently considered agent.
Unfortunately, this has the effect that the costs of the edges change after every shortest-path computation, obstructing the use of algorithmic shortest-path techniques that are based on pre-computations.
We overcome this by incorporating two types of adjustments to the above best-response variant.

\subsection{Simultaneous Best Response}

\emph{Simultaneous best response} is a process where, in every step, all agents change their strategy to their best response with respect to the current strategy profile.
While this feels like a step in the right direction in the sense that the strategy profile changes less frequently in relation to the number of shortest-path computations, there are two major downsides.
First, we show in~\Cref{sec:best-response-cycles} that this leads to a best-response cycle.
This is not very surprising as, in the synergistic setting, two agents may attempt to join each other and thereby avoid each other due to simultaneously changing their strategy.
We extend this by also providing a best-response cycle in a setting that is almost sequential in the sense that only two equivalent agents choose their new path together; see~\Cref{sec:best-response-cycles}.

The second downside is that the simultaneous setting alone is not enough to let us use pre-computation techniques for the shortest-path queries.
The reason for this is that the cost of an edge $e$ for agent $i$ depends on whether $i$ is currently using it.
To be more precise, if $S = (p_1, \dots, p_k)$ is the current strategy profile and $e \in p_i$, then the cost of $e$ is $c_e(\ell(S))$.
If, however, $e \notin p_i$, then the cost of $e$ is $c_e(\ell(S) + 1)$ as $i$ switching to a strategy that uses $e$ would mean that the load of $e$ increases by $1$.
In other words, the best response aims to minimize $\cost(p_i', (S_{-i}, p_i'))$, i.e., the cost of the edges is not based on $S$ but on the updated profile $(S_{-i}, p_i')$.
To resolve this, we introduce a best-response variant in which the agents ignore the impact of their own change.
In ATA this setting of simultaneous updates based on the costs after the last round is sometimes done under the premise of \emph{capacity restraint} \cite{Hershdorfer1966, sheffi1985urban}.
For a resource selection game,~\citet{Strat_Resour_Selec_Homop_Agent-Gadea23} considered a similar concept and used the terms impact-aware and impact-blind.
Following this, we also call the best-response variant defined above and the resulting Nash equilibrium \emph{impact-aware}.

\subsection{Impact-Blind Best Response}

An \emph{impact-blind best response} of agent $i$ to a strategy profile $S = (p_1, \dots, p_k)$ is a strategy $p_i' \in P_i$ that minimizes $\cost(p_i', S)$.
All definitions for the impact-aware setting can be directly translated (impact-blind Nash equilibrium, sequential/simultaneous, convergence, best-response cycle).
We note that in practical settings, the impact of a single agent changing is usually not very high.
Fortunately, the change from impact-aware to impact-blind has the effect that even simultaneous best response converges.
We note that this is specific to the synergistic setting, which makes intuitive sense for the following reason; see~\Cref{sec:conv-best-resp} for details.
In the synergistic setting, the own impact of changing the strategy only reinforces the decision, which leads to slightly higher stability of the impact-blind variant (the converse is true for the avoidant setting).
Nonetheless, we find it quite surprising that this seemingly minor effect is sufficient to obtain convergence rather than a best-response cycle.

\section{Game Analysis}\label{s:analysis}

Here we analyze the different best-response variants for STA; see~\Cref{tab:variants_matrix} for an overview.
A brief analysis of system optima to complement this chapter can be found in \Cref{a:system_optima}.
We start with the first row (impact-aware) in~\Cref{tab:variants_matrix}.
For the sequential case (first column), convergence already follows from Rosenthal's results~\cite{rosenthal1973class} on congestion games; recall~\Cref{thm:rosenthal-congestion-games}.
For the simultaneous case (second column), we give a best-response cycle in~\Cref{sec:best-response-cycles}.
We additionally give a best-response cycle for a variant that lies between the simultaneous and the sequential setting.
For the second row (impact-blind), we show in~\Cref{sec:conv-best-resp} that for synergistic congestion games (including STA) best response converges to an equilibrium, even in the simultaneous setting.
This notably distinguishes STA from ATA, which has best-response cycles in all cases except for the sequential impact-aware setting covered by~\Cref{thm:rosenthal-congestion-games}.

\begin{table}
  \captionsetup{skip=5pt}
  \centering
  \caption{Overview of our results on the best-response variants of congestion games (CG) and traffic assignment (TA).}
  {
  \setlength\extrarowheight{8pt}
  \begin{tabular}{ p{2.4cm} | p{2.3cm} | p{2.3cm} }
    & \makecell[c]{sequential} & \makecell[c]{simultaneous}
    \\[.5em]
    \hline
    \makecell[c]{impact-aware\\{\footnotesize$\cost(p_i', (S_{-i}, p_i'))$}}
    & \makecell[c]{converges (CG)\\\Cref{thm:rosenthal-congestion-games} \cite{rosenthal1973class}}
    & \makecell[c]{cycles (STA)\\\Cref{lem:proactive_seq_br_cycle}}
    \\[1em]
    \hline
    \makecell[c]{impact-blind\\{\footnotesize$\cost(p_i', S)$}}
    & \multicolumn{2}{c}{\makecell[c]{converges (synergistic CG)\\\Cref{thm:simultaneous_reactive}}}
  \end{tabular}
  }
  \label{tab:variants_matrix}
\end{table}

\subsection{Best-Response Cycles}
\label{sec:best-response-cycles}

Simultaneous best response does not converge for STA, as agents simultaneously attempting to swap towards each other makes it so that they essentially swap past each other and never meet.

\begin{restatable}{observation}{ObsSimIACycle}
  \label{lem:proactive_seq_br_cycle}
  Simultaneous and impact-aware best response for synergistic traffic
  assignment has a best-response cycle.
\end{restatable}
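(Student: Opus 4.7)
The plan is to exhibit an explicit two-agent traffic assignment instance whose simultaneous impact-aware best response returns to its initial profile after one step, which is by definition a best-response cycle. The construction should realize the intuition already sketched: two symmetric agents each want to join the other, but by changing strategies simultaneously they swap past one another and remain separated.

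Concretely, I would take the ``diamond'' graph on vertices $s, u, v, t$ with edges $(s,u), (u,t), (s,v), (v,t)$, so that there are exactly two $s$-$t$ paths, $P_u = \{(s,u),(u,t)\}$ and $P_v = \{(s,v),(v,t)\}$. All edges carry the same strictly decreasing cost function, for instance $c(1) = 2$ and $c(2) = 1$. Two agents share the O-D pair $(s,t)$, so both have strategy set $\{P_u, P_v\}$. Starting from $S = (P_u, P_v)$, agent $1$'s impact-aware best response compares $\cost(P_u, S) = 2c(1) = 4$ against $\cost(P_v, (S_{-1}, P_v)) = 2c(2) = 2$, so switching to $P_v$ is strictly better. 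By symmetry, agent $2$'s unique best response is $P_u$. Applying both updates simultaneously yields $S' = (P_v, P_u)$, which agrees with $S$ up to renaming $u \leftrightarrow v$; reapplying the same argument gives $S'' = S$, closing a best-response cycle of length two.

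I expect no real obstacle in the argument itself. The one subtle point worth emphasizing is that impact-awareness is exactly what drives the swap: each agent evaluates the alternative path at the load it would have \emph{after} the switch, so the comparison is $2c(2)$ against $2c(1)$ rather than $2c(1)$ against $2c(1)$ (which would leave both strategies indifferent and hence admit the starting profile as a valid selection under impact-blindness). If a shared O-D pair is regarded as undesirable under the paper's notational convention, the same gadget still works with two disjoint O-D pairs routed through a common pair of middle vertices; the analysis is verbatim.
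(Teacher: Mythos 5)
Your proof is correct and follows essentially the same approach as the paper: an explicit two-agent gadget in which simultaneous impact-aware updates make the agents perpetually swap past each other and never meet. Your diamond instance is in fact slightly simpler than the paper's (which uses two distinct O-D pairs plus extra $\varepsilon$-cost edges, so that the two states of the cycle have different load vectors rather than merely being distinct strategy profiles), but the mechanism and the verification are the same.
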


\begin{proof}
  Consider the graph in \Cref{fig:proactive_seq_br_cycle} (left) with the two O-D pairs $(s_1, t_1)$ and $(s_2,t_2)$.
  We call agent $1$ blue and agent $2$ red.
  The cost of the two bold central edges is $1$ for load at most $1$ and $0$ for larger load.
  There are two edges with constant cost $\varepsilon \in (0, 1)$ on the left.
  All remaining edges have cost $0$.
  
  If each agent chooses the cheapest path without assuming the existence of other agents, we get configuration A in \Cref{fig:proactive_seq_br_cycle} (right), i.e., blue chooses the top path and red the bottom path, both with cost $1$.
  If both agents simultaneously play impact-aware best response, then blue swaps to the bottom path and red swaps to the top path, both expecting to reduce their cost to $\varepsilon$.
  This yields configuration B, in which both agents actually have cost $1 + \varepsilon$.
  From there, the best response of both is to swap back, expecting cost $0$.
  However, the simultaneous swap results in configuration A (with cost $1$ for both) from where the cycle repeats indefinitely.
\end{proof}

In the following, we additionally consider a best-response variant in which only agents with the same O-D pair change their strategy simultaneously.
We additionally assume that the agents know of each other, i.e., in their response valuation, they account for the fact that all agents with the same O-D pair will make the same choice.

Formally, we consider two agents to be equivalent if they have the same O-D pair and we call the resulting equivalence classes \emph{groups}.
Then \emph{group-simultaneous best response} refers to a process in which all agents from one group simultaneously change their strategy, while the different groups are considered sequentially.
Consider a step in which one group changes and assume without loss of generality that this group consists of the first $a$ agents, i.e., the group is the set $[a] \subseteq [k]$.
Given a strategy profile $S = (p_1, \dots, p_k)$ and a strategy $p' \in P_1$ that is valid for the group $[a]$, let $(S_{-[a]}, p') = (p', \dots, p', p_{a + 1}, \dots, p_k)$ be the strategy profile where all agents in $[a]$ change their strategy to $p'$.
The \emph{group-impact-aware best response} for agents in $[a]$ is the strategy $p' \in P_1$ for which $\cost(p', (S_{-[a]}, p'))$ is minimized.
We get the following.

\begin{figure}[t]
\captionsetup{skip=5pt}
\centering
\includegraphics{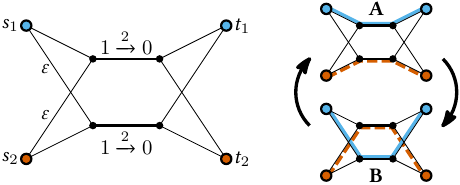}
\caption{Graph (left) with a best-response cycle (right) for simultaneous impact-aware best response.
  The two bold edges have cost $1$ for load below $2$ and cost $0$ for load at least $2$.
  Two edges have cost $\varepsilon$.
  All other edges have cost $0$.}
\label{fig:proactive_seq_br_cycle}
\Description{Graph (left) with a best-response cycle (right) for simultaneous impact-aware best response.
The two bold edges have cost 1 for load below 2 and cost 0 for load at least 2.
Two edges have cost epsilon.
All other edges have cost 0.}
\end{figure}

\begin{restatable}{theorem}{ThmCycleInGroupedSetting}
  \label{thm:cycle-in-grouped-setting}
  Group-simultaneous and group-impact-aware best response for synergistic traffic assignment has a best-response cycle, even if the largest group has size $2$.
\end{restatable}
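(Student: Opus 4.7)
The plan is to prove \Cref{thm:cycle-in-grouped-setting} by exhibiting an explicit synergistic traffic assignment instance together with an initial strategy profile for which the group-simultaneous, group-impact-aware best-response process cycles. Since a game with every group of size $1$ reduces to sequential best response in a congestion game, which converges by \Cref{thm:rosenthal-congestion-games}, the instance must contain at least one group of size exactly two, but no group larger than two.

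The starting template I would use is the gadget from the proof of \Cref{lem:proactive_seq_br_cycle}: two parallel ``bold'' edges $e_1, e_2$ whose cost jumps from a high to a low value once the load exceeds a prescribed threshold, together with $\varepsilon$-detour edges that create asymmetric path options for two different O-D pairs. I would replace each of the two single agents from that gadget by a group of two identical agents (so the largest group has size $2$), and retune the load thresholds on $c_{e_1}$ and $c_{e_2}$ so that the collective load of one group of two is not enough to push an edge past its threshold, whereas the combined load of both groups is.

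With this setup I expect a best-response cycle of length four, consisting of configurations $S_0, S_1, S_2, S_3$ such that in $S_i$ the unique strict group-impact-aware best response of the group scheduled next is to switch to the path it takes in $S_{i+1 \bmod 4}$. The point exploited at each step is that a group of two, when evaluating an alternative path, sees the cost \emph{after} both of its agents switch, i.e., with $+2$ added on any newly used edge. This $+2$ anticipation crosses the cost threshold on the target bold edge in exactly the current configuration, so switching is strictly preferred to staying. Once the group has switched and the other group then takes its own group-impact-aware best response, the edge that the first group moved to loses the contribution of the other group and drops back below the threshold, so after two further group steps the first group is again strictly better off switching back to its original path. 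A small table of edge loads and induced path costs at $S_0, \dots, S_3$, in the style of \Cref{fig:proactive_seq_br_cycle}, verifies the four strict best-response inequalities.

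The main obstacle is calibrating the thresholds and the detour weights so that all four inequalities hold simultaneously. In a synergistic game the natural tendency is for two groups to coalesce on one saturated edge and stay there, which would stop the cycle immediately; the asymmetry of the $\varepsilon$-detours must therefore be chosen so that in every intermediate configuration exactly the group about to move sits on an unsaturated bold edge while the other group sits elsewhere. Checking these inequalities jointly, and in particular showing that the reaction of the ``resting'' group always re-creates the incentive for the moving group to switch back, is the technical heart of the construction; I expect this delicate bookkeeping to drive the final numerical choice of thresholds and $\varepsilon$.
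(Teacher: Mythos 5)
Your proposal is a construction plan rather than a construction: the four best-response inequalities that would certify the cycle are never instantiated, and you yourself flag the calibration of thresholds and detour weights as the unresolved ``technical heart.'' More importantly, the template you chose --- the swap gadget of \Cref{lem:proactive_seq_br_cycle} with each single agent replaced by a group of two --- has a structural obstruction that retuning thresholds cannot fix. The cycle in \Cref{lem:proactive_seq_br_cycle} exists only because the two parties move \emph{simultaneously} and swap past each other; in the group-simultaneous setting your two groups move in alternation, so the second group to act sees where the first group went. In a synergistic game a moving group is only ever attracted \emph{towards} load: if the threshold requires both groups, the first group that switches joins the other on one bold edge, saturates it, and both groups are then content (the $\varepsilon$-detour is paid once and the process stops); if one group suffices, each group saturates its own edge from the start and never moves. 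For an alternating two-group cycle you would need each group's improving move to strictly harm the other group, but with two groups on parallel bold edges the only way to harm the other group is to leave a shared edge --- and a group sitting on a saturated shared edge has no best response that leaves it.

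The paper's proof accordingly uses a richer instance (\Cref{fig:proactive_seq_mult_brc}): one group of size two (blue) and two \emph{singleton} groups (red, orange) on a path with three bold edges, with thresholds a single party cannot reach alone. The cycle is driven by a three-way interaction that a two-party template cannot express: blue's joint move onto the horizontal path discounts the bold edges red and orange already use; this lures red and orange, one after the other, onto a shared later bold edge, which makes them abandon blue's edges; blue's cost rises and it retreats, which removes the discount red and orange relied on, so they retreat too, restoring the initial configuration. If you want to salvage your approach you need at least a third independently moving party whose reaction to the pair's move destroys the pair's incentive to stay; merely re-weighting the two-group swap gadget will converge to an equilibrium instead of cycling.
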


\begin{proof}
  Consider the graph in \Cref{fig:proactive_seq_mult_brc} (left) with the O-D pairs $(s_i, t_i)$ for $i \in [4]$.
  Note that the agents $1$ and $2$ have the same O-D pair and thus form a group.
  Most edge costs are constant (indicated by the numbers, $0$ if there is no number), except for the three bold edges.
  Here, a cost of $a \oset{\text{$\ell$}}{\rightarrow} b$ indicates that the cost is $a$ for loads below $\ell$ and the cost is decreased to $b$ for load at least $\ell$.
  In accordance with the figure, we refer to the agents as blue ($1$ and $2$), red ($3$), and orange ($4$).
  Also, we refer to the path from $s_1$ to $t_1$ with the three bold edges as \emph{horizontal path}.
  We now first observe that each agent has two relevant options for choosing their path.
  Then we show that the best-response process cycles through the four configurations shown in \Cref{fig:proactive_seq_mult_brc} (right).

  The two options for blue are the top path of cost $20$ (configurations A and D) and the horizontal path with cost depending on red and orange (configuration B and C).
  Moreover, red and orange both have to go up to the horizontal path and then back down.
  Their choice is to either go up early (configurations A and B) or they go up late (configurations C and D).
  They each use their own bold edge on the horizontal path in the early case and share a bold edge in the late case.
  In principle, they could stay longer on the horizontal path, using multiple bold edges, but this would require using the edge of cost $10$ on the horizontal path, which is never good.

  If each agent uses the cheapest path without assuming that any other agents exist, we obtain configuration A in which blue uses the top path (cost $20$) and red and orange go up early.
  As there is no sharing, the cost of red and orange is $5 + 10 = 15$.
  Red and orange cannot improve as each individual change would yield cost $9 + 7 = 16$.
  However, blue can improve.
  As blue is a group of two and we are group-impact-aware, changing blue to the horizontal path reduces the cost of the first two bold edges, yielding cost $1 + 1 + 10 + 7 = 19$ for blue in configuration B.
  Note that this also reduces the cost of red and orange to $1 + 10 = 11$.
  Moreover, it enables an additional improvement for red and orange, as the last bold edge on the horizontal path now has load $2$.
  It is now beneficial for each of them individually to switch to the path going up late, resulting in cost $9 + 1 = 10$.
  If both do the switch (which technically happens sequentially in any order), we end up in configuration C.
  Theses changes of red and orange also impact blue, as its cost is now increased to $5 + 5 + 10 + 1 = 21$.
  With this change, blue has no incentive to stay on the horizontal path and thus goes back to the top path of cost $20$, yielding configuration D.
  This change in turn increases the costs of red and orange to $9 + 7 = 16$ as the last bold edge on the horizontal path has now only load $2$.
  With this, red and orange have no incentive to go up late and thus switch back to their early paths with cost $5 + 10 = 15$ (again, this swap is technically not simultaneous but one after the other).
  This leads us back to configuration A from where the cycling continues.
  \begin{figure*}
    \captionsetup{skip=5pt}
    \centering
    \includegraphics[page=3]{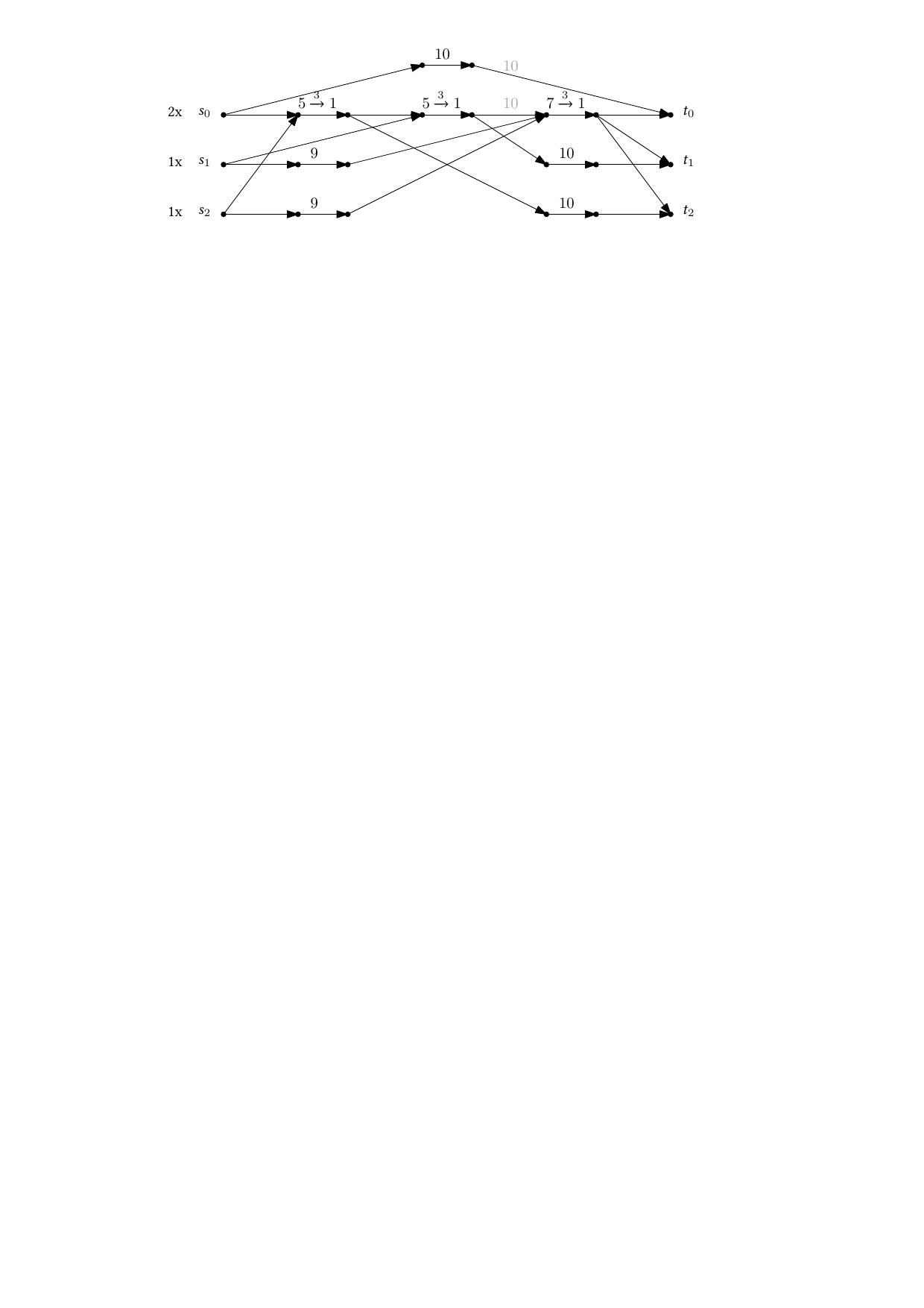}
    \caption{Graph (left) with a best-response cycle (right) for \Cref{thm:cycle-in-grouped-setting}.
      Edges are implicitly oriented from left to right.}
    \label{fig:proactive_seq_mult_brc}
    \Description{Graph (left) with a best-response cycle (right). Edges are implicitly oriented from left to right.}
  \end{figure*}
\end{proof}

Concerning the best-response variants in~\Cref{tab:variants_matrix}, the group-simultaneous and group-impact-aware variant lies between sequential and simultaneous.
Moreover, it lies above impact-aware in the sense that the agents are not only aware of their own impact but also of that of equivalent agents.
We note that we find~\Cref{thm:cycle-in-grouped-setting} somewhat surprising as the grouped setting is quite close to the sequential setting, in which best response converges.
We only need groups of size $2$, which makes the setting almost sequential.
Moreover, using impact-aware but not group-impact-aware best response would also make it equivalent to the sequential setting for the following reason: 
In the impact-aware variant, the group as a whole would swap if and only if the first agent of the group would swap individually.
Moreover, for later agents of the group, the previously swapped agents only reinforce the choice.

\subsection{Convergence of Impact-Blind Best Response}
\label{sec:conv-best-resp}

Here, we show that impact-blind best response converges to an equilibrium for synergistic congestion games.
This is particularly interesting for the simultaneous variant, but our proof also holds for the sequential setting.
Note that the requirement of the congestion game being synergistic is crucial, i.e., the cost of a resource must not increase with a higher load.
For non-synergistic congestion games like ATA the impact blindness is in fact detrimental.
Thus, their impact-blind formulations can run into best-response cycles.

Our proof uses a potential function $\phi \colon P_1 \times \dots \times P_k \to \mathbb R$ that maps a strategy profile $S$ to a number.
Assume agents change their strategy using impact-blind best response, yielding a new strategy profile $S'$.
We show that in this case $\phi(S') < \phi(S)$ holds.
As the potential cannot decrease indefinitely (the strategy space is finite), the process must reach an equilibrium.

We use the potential function introduced by Rosenthal~\cite{rosenthal1973class} for the sequential and impact-aware setting.
It sums for each resource the cost of all loads up to the actual load.
To make this more formal, recall that $c_e(\ell)$ is the cost of resource $e$ for load $\ell$ and that $\ell_e(S)$ refers to the load of $e$ given the strategy profile $S$.
The \emph{edge potential} $\phi_e(S)$ of $e$ is defined as $\phi_e(S) = \sum_{\ell = 0}^{\ell_e(S)} c_e(\ell)$.
The \emph{potential} $\phi$ is the sum over all resources, i.e., $\phi(S) = \sum_{e \in E}\phi_e(S)$.

Our proof that $\phi$ decreases with each round of impact-blind best response works roughly as follows.
Given strategy profile $S = (p_1, \dots, p_k)$, each agent $i$ changes their strategy from $p_i$ to $p_i'$ such that $\cost(p_i', S)$ is minimized.
With this change, agent $i$ anticipates that their cost decreases by $\Delta_i = \cost(p_i, S) - \cost(p_i', S)$.
Note that the actual cost decrease can be different as the agent did not factor in their own impact, or even the impact of the other agents changing at the same time in the simultaneous setting.
However, we can show that the potential decreases at least by the total anticipated cost of all agents.
As $\Delta_i \ge 0$ (otherwise the agent would not have changed strategy), this results in a potential decrease.

\begin{theorem}\label{thm:simultaneous_reactive}
  In a synergistic congestion game, impact-blind best response (simultaneous or sequential) converges to an equilibrium.
\end{theorem}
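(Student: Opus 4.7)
The plan is to carry out the argument sketched just above the theorem: use Rosenthal's potential $\phi(S)=\sum_{e\in E}\sum_{\ell=0}^{\ell_e(S)}c_e(\ell)$ and show that one round of impact-blind best response (whether one agent updates, or all do simultaneously) drops $\phi$ by at least the sum of the agents' anticipated savings $\Delta_i=\cost(p_i,S)-\cost(p_i',S)\ge 0$. Since each $\Delta_i\ge 0$ and since an agent who is not content has some $\Delta_i>0$ (and we may adopt the convention that a content agent keeps $p_i'=p_i$), any non-equilibrium round strictly decreases $\phi$. The strategy space is finite, so the process must terminate, and it can only terminate at an equilibrium.

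Concretely, I would first rewrite the total anticipated saving as an edge sum. Since the cost of $p_i$ and of $p_i'$ are both evaluated at the loads of $S$, a double-counting step yields
\[
  \sum_{i=1}^{k}\Delta_i \;=\; \sum_{e\in E} c_e(\ell_e(S))\,\bigl(\ell_e(S)-\ell_e(S')\bigr),
\]
where $S'=(p_1',\dots,p_k')$ is the profile after the (simultaneous) update. In the sequential variant only one term in the right-hand sum is nonzero, but the identity is the same. Thus it suffices to establish, edge by edge, the inequality
\[
  \phi_e(S)-\phi_e(S') \;\ge\; c_e(\ell_e(S))\,\bigl(\ell_e(S)-\ell_e(S')\bigr).
\]

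This edge-wise inequality is where the synergistic assumption (i.e.\ $c_e$ non-increasing) enters, and it is the main and essentially only obstacle. I would verify it by cases on the sign of $\ell_e(S)-\ell_e(S')$: writing $n=\ell_e(S)$ and $n'=\ell_e(S')$, the left-hand side is $\sum_{\ell=n'+1}^{n}c_e(\ell)$ if $n\ge n'$ and $-\sum_{\ell=n+1}^{n'}c_e(\ell)$ if $n<n'$. In the first case every summand satisfies $c_e(\ell)\ge c_e(n)$ because $\ell\le n$; in the second case every summand satisfies $c_e(\ell)\le c_e(n)$ because $\ell>n$. Both cases give the claimed bound, and summing over $e$ yields $\phi(S)-\phi(S')\ge\sum_i\Delta_i$.

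Finally I would assemble the convergence argument. At any profile $S$ that is not an equilibrium, some agent $i$ has a strategy $p_i'\in P_i$ with $\cost(p_i',S)<\cost(p_i,S)$, so under the convention that content agents do not move we get $\sum_i\Delta_i>0$ and hence $\phi(S')<\phi(S)$. Because $\phi$ takes only finitely many values on the finite set of strategy profiles, the best-response process cannot decrease $\phi$ forever, so it must eventually reach an $S$ with $\sum_i\Delta_i=0$, i.e.\ an impact-blind Nash equilibrium. The argument is agnostic to whether one, several, or all agents update per round, which covers both the sequential and the simultaneous case in one stroke.
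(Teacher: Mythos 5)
Your proposal is correct and follows essentially the same route as the paper: Rosenthal's potential, the rewriting of the total anticipated saving $\sum_i\Delta_i$ as the edge sum $\sum_e c_e(\ell_e(S))\bigl(\ell_e(S)-\ell_e(S')\bigr)$, and the edge-wise bound $\phi_e(S)-\phi_e(S')\ge c_e(\ell_e(S))\bigl(\ell_e(S)-\ell_e(S')\bigr)$ proved by the same two-case monotonicity argument. The paper's proof also treats the general "any non-empty subset of agents updates" setting in one stroke, exactly as you do.
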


\begin{proof}
  We consider the general setting where each round of best response consists of any non-empty subset of agents changing their strategy.
  This in particular includes the simultaneous and sequential variants.
  Let $S = (p_1, \dots, p_k)$ be the current strategy profile and let $S' = (p_1', \dots, p_k')$ be the resulting profile after the change.
  Let $\Delta_i$ be the anticipated cost decrease for agent $i$ as defined above and let $\Delta = \sum_{i = 1}^k \Delta_i$.
  As $\Delta_i > 0$ for some $i$, we also have $\Delta > 0$.

  We now change perspective on the anticipated cost decrease from the agents to the resources.
  For $e \in p_i \setminus p_i'$, agent $i$ changed from using $e$ to not using $e$.
  Thus, the resource $e$ contributes $c_e(\ell_e(S))$ positively to $\Delta_i$.
  Conversely, if $e \in p_i' \setminus p_i$, then $e$ contributes $-c_e(\ell_e(S))$ to $\Delta_i$.
  To define the overall contribution of $e$ to $\Delta$, note that the load of $e$ decreased by $\ell_e(S) - \ell_e(S')$.
  Each of these load decreases corresponds to an agent anticipating a cost decrease of $c_e(\ell_e(S))$, thus we define the anticipated cost decrease of an edge $e$ as $\Delta_e = c_e(\ell_e(S)) \cdot (\ell_e(S) - \ell_e(S'))$.
  It directly follows that $\sum_{e \in E} \Delta_e = \Delta > 0$.

  Recall that our goal is to show that $\phi(S) - \phi(S') > 0$.
  For this, we show that for every resource $e \in E$ it holds that $\phi_e(S) - \phi_e(S') \ge \Delta_e$, which implies $\phi(S) - \phi(S') \ge \Delta > 0$.
  To show $\phi_e(S) - \phi_e(S') \ge \Delta_e$, note that the potential difference is
  \begin{equation*}
    \phi_e(S) - \phi_e(S') = \sum_{\ell = 0}^{\ell_e(S)} c_e(\ell) - \sum_{\ell = 0}^{\ell_e(S')} c_e(\ell).
  \end{equation*}
  First consider the case that $\ell_e(S) \ge \ell_e(S')$, i.e., decreasing load.
  Then we get the following, where the inequality holds as the sum has $\ell_e(S) - \ell_e(S')$ terms and $c_e$ is non-increasing
  \begin{equation*}
    \phi_e(S) - \phi_e(S') = \sum_{\ell = \ell_e(S') + 1}^{\ell_e(S)} c_e(\ell) \ge
    (\ell_e(S) - \ell_e(S')) \cdot c_e(\ell_e(S)) = \Delta_e.
  \end{equation*}
  For the case $\ell_e(S) \le \ell_e(S')$, we analogously obtain
  \begin{equation*}
    \phi_e(S) - \phi_e(S') = - \sum_{\ell = \ell_e(S) + 1}^{\ell_e(S')} c_e(\ell) \ge
    (\ell_e(S) - \ell_e(S')) \cdot c_e(\ell_e(S)) = \Delta_e,
  \end{equation*}
  which concludes the proof.
\end{proof}

As STA is a synergistic congestion game, the above theorem holds for STA.
In the remainder of the paper, we utilize the fact that the simultaneous variant of impact-blind best response converges.

\section{STA Algorithm}
\label{s:staAlgo}

In a traffic assignment game with converging best response, computing an equilibrium essentially boils down to iterated shortest-path computations.
To make this efficient, one can potentially apply speed-up techniques that have been developed in the context of route planning~\cite{DBLP:series/lncs/BastDGMPSWW16}.
These techniques exploit the fact that the graph does not change by
pre-computing auxiliary information once in the beginning, which is then used to enable highly efficient queries.

For traffic assignment, the graph topology does not change between shortest-path queries, but the edge costs change depending on the load.
Fortunately, there are speed-up techniques that support such cost changes~\cite{DBLP:journals/jea/DibbeltSW16,DBLP:journals/transci/DellingGPW17}, which work in three phases.
The first phase is a so-called \emph{metric-independent preprocessing}, where pre-computations are done based only on the topology of the graph, ignoring the edge costs.
The second phase is called \emph{customization} and performs pre-computations based on the edge costs.
The third phase are the shortest-path \emph{queries}, which are highly efficient due to the data structures computed in the first two phases.

Concerning the different variants for STA, impact-blind and simultaneous best response is particularly suited for this algorithmic approach as all agents do shortest-path computations with respect to the same cost function once before updating it due to the changed load.
This is the reason why we focus on impact-blind and simultaneous best response for computing STA equilibria.
We note that this is viable as this best-response variant converges for STA due to \Cref{thm:simultaneous_reactive}, which is not true for ATA.

\parheader{Algorithm description}

With these preliminary considerations, we are now ready to state our algorithm.
We have a \emph{load vector} storing the load of each edge in the current traffic assignment.
It is initialized with all $0$s.
Moreover, we do the metric independent preprocessing once in the beginning.
Then, in each iteration, we first do the customization based on the edge costs defined by the current load vector.
Afterwards, the resulting data structure is used to compute a shortest path for every O-D pair.
Finally, the load vector is updated according to the resulting paths.
This is iterated until no agent changes their path, which is equivalent to checking whether the load vector changed, due to the proof of \Cref{thm:simultaneous_reactive}.

\parheader{Implementation details}

There are different variants of the above-mentioned three-phase approach with different trade-offs between the running times of the different phases.
In our setting, the bottleneck is the query.
We thus use a so-called \emph{customizable contraction hierarchy (CCH)}~\cite{DBLP:journals/jea/DibbeltSW16}, which provides faster queries at the cost of slower customization compared to CRP~\cite{DBLP:journals/transci/DellingGPW17}.
Explaining CCH is beyond the scope of this paper and we refer the reader to the recent survey~\cite{Blaesius2025}.
For readers familiar with CCH, we want to briefly mention the details of our implementation.
We compute the nested dissection order with InertialFlowCutter~\cite{gottesbueren2019faster}.
We use perfect customization, elimination tree queries, and all improvements stated by \citet{Blaesius2025}.
We do not use parallelization or SIMD operations.

    \section{Experimental Evaluation}
    \label{sec:experimental_evaluation}
    We provide experimental results on the convergence and resulting traffic flows of our STA algorithm using real-world data and an implementation in C++17%
    \footnote{The code is available at \url{https://github.com/molaupi/synergistic-traffic-assignment}.}.
    We compile with \texttt{gcc} $13.3.1$ using \texttt{-O3}.
    We use a machine with Fedora $39$ (kernel $6.8.11$), $\SI{32}{GiB}$ of DDR$4$-$4266$ RAM, and an AMD Ryzen $7$ PRO $5850$U CPU clocked at $\SI{4.40}{Ghz}$.
    
    \subsection{Inputs and Methodology}
    \label{subsec:experiments_inputs_and_methodology}
    We evaluate our algorithm on the road network of the city of Stuttgart, Germany, and the surrounding region.
    We use a road network based on the OSM network of Germany%
    \footnote{Available at \url{https://download.geofabrik.de/europe/germany.html}.}
    and the OSM relation of the Stuttgart city limits%
    \footnote{See \url{https://www.openstreetmap.org/relation/2793104}.}.
    We only consider travel demand within these city limits but allow travelers to also use major streets in the surrounding area.
    For this, we include roads of OSM rank at least \emph{tertiary link}%
    \footnote{See \url{https://wiki.openstreetmap.org/wiki/Key:highway}.}
    in a rectangular box around the city that is $3$ times the width and height of the bounding box of the city limits.
    The resulting graph contains $\num{70915}$ vertices and $\num{138862}$ edges.
    We retrieve traversal times $d(e)$ for each edge $e$ using the length and speed limit of each road segment encoded in the OSM data.
    
    Our demand data~\cite{schlaich2011verkehrsmodellierung} models travel demand in Stuttgart for a whole week and was originally forecast using mobiTopp \cite{mallig2013mobitopp,mallig2015modeling}, which was calibrated from a household travel survey conducted in 2009/2010~\cite{stuttgart2011mobilitat}.
    We use the same test scenarios that \citet{buchhold2019real} use to evaluate ATA algorithms, i.e., a typical morning peak, a typical evening peak, a typical day and a whole week; see~\Cref{tab:scenarios}.
    
    \begin{table}[t]
      \captionsetup{skip=5pt}
      \centering
      \caption{
        Four traffic Scenarios used for the evaluation of STA.
      }\label{tab:scenarios}
      \begin{tabularx}{\linewidth}{LLR}
        \toprule
        Scenario       & Analysis period     & O-D pairs \\
        \midrule
        \texttt{S-morn} & Tue., 7:30-8:30am   &     $\num{32034}$ \\
        \texttt{S-even} & Tue., 4:30-5:30pm &     $\num{38021}$ \\
        \texttt{S-day}  & a whole Tuesday     &    $\num{453926}$ \\
        \texttt{S-week} & a whole week        & $\num{2946810}$ \\
        \bottomrule
      \end{tabularx}
    \end{table}
    
    The cost function for STA may significantly depend on the particular application of STA, e.g., the type of vehicle used.
    Here we use a simple function with a single tuning parameter $r\in[0,1]$, which we call the \emph{selfishness parameter}.
    This allows us to interpolate between ignoring sharing ($r=1$) and aggressively optimistic sharing ($r=0$).
    Let $d(e)$ and $\ell_e$ be the travel time (assuming free flow) and the load of edge $e$.
    We define the cost $c_e^r(\ell)$ of $e$ as
    \begin{equation*}
      c_e^r(\ell_e) = r\cdot{}d(e) + (1-r)\cdot\frac{d(e)}{\ell_e + 1}~.
    \end{equation*}

    \subsection{Sharing Evaluation}
    \label{subsec:experiments_evaluation}
    
    Due to our simple cost function, we evaluate sharing using quantities that are independent of the concrete definition of $c_e^r$.
    Let $\distance(u,v)$ be the length of the shortest path between $u$ and $v$ with respect to the travel time $d$.
    Further, we define $D$ as a natural extension of $d$ on sets of edges, i.e., $D(p) = \sum_{e\in p}d(e)$.
    Recall that $p_i$ denotes the path of agent $i \in [k]$ and  $\ell_e$ is the load on edge~$e$.

    We define the \emph{average stretch} as
    \begin{equation*}
      \frac{1}{k}\sum_{i\in [k]} \frac{D(p_i)}{\distance(s,t)}~.
    \end{equation*}

    The \emph{average sharing} is
    \begin{equation*}
      \frac{1}{k}\sum_{i\in [k]}\frac{\sum_{e\in p_i}d(e)\cdot(\ell_e-1)}{D(p_i)}~,
    \end{equation*}
    i.e., the sharing of each agent $i \in [k]$ is the number of other agents they share their path with, averaged over the whole path.
    In other words, if you pick a random point during $i$'s journey, then $i$'s sharing value is the expected number of other agents on $i$'s current edge.
    As there is also sharing without STA (i.e., for $r = 1$), we define the \emph{normalized average sharing} as the average sharing relative to the average sharing obtained for $r = 1$.
    
    \Cref{fig:stretch_sharing} shows the trade-off between average stretch and normalized average sharing for different choices of $r$.
    Stretch and sharing both increase with lower $r$, seemingly in a linear relation.
    However, the slope is substantially below $1$, indicating that slightly longer paths can result in higher sharing.
    For instance, at $r\approx 0.0075$, the sharing can be doubled at the cost of only $1.25$ times longer paths.

    \Cref{fig:sharing_potential} analyzes the distribution of sharing potential along riders' paths.
    The plot has one curve per pair of selfishness $r \in \{0, 1\}$ and load threshold $\ell \in \{1, 10, 100\}$.
    Each curve shows the fraction of agents (y-axis) for whom at least a fraction $x$ of their path (x-axis) is shared with at least $\ell$ others in the $r$ STA flow.
    The curves for STA with strong sharing ($r = 0$) stay high even for large values of $\ell$, particularly in comparison to just using individual shortest paths ($r = 1$).
    Thus, most agents achieve high sharing on a large portion of their path, which is important for applications like ride sharing.
    
    \begin{figure}[t]
      \captionsetup{skip=5pt}
      \centering
      \begin{tikzpicture}
        \begin{axis}[
          every y tick label/.append style={inner ysep=0pt, outer ysep=0pt},
          width=\columnwidth,
          height=0.42\columnwidth,
          ymin=1,
          ymax=2.6,
          ytick={1,1.25,...,3},
          yticklabels={1,1.25,1.5,1.75,2.0,2.25,2.5},
          ylabel={\makebox[0pt]{average stretch}},
          xmin=1,
          xmax=6.55,
          xlabel={normalized average sharing},
          xtick={1,2,3,4,5,6},
          xticklabels={1,2,3,4,5,6},
          grid=major,
          ytick align=outside,
          xtick align=outside,
          ytick pos=left,
          xtick pos=left,
          ]
          \newcommand{\addPoint}[5][rotate=90,anchor=west]{
            \addplot[mark=x,red,mark size=2.5pt] coordinates {(#5,#3)} node[#1] {#2};
          }
          \addPoint{0}{2.099742}{1491.663019}{6.06420911843301}
          \addPoint{0.0025}{1.460697}{736.277759}{2.9932647273246}
          \addPoint{0.005}{1.327979}{591.241643}{2.40363467955487}
          \addPoint{0.0075}{1.244435}{508.271926}{2.06632946518915}
          \addPoint{0.01}{1.198864}{462.449586}{1.88004325408349}
          \addPoint{0.02}{1.109245}{371.948767}{1.51212108613072}
          \addPoint{0.04}{1.061457}{324.177046}{1.31790985852675}
          \addplot[dashed,domain=1:6.6] {x};
        \end{axis}
      \end{tikzpicture}
      \caption{The normalized average sharing in relation to the average stretch in the morning scenario for different values of $r$; the numbers at the points indicate $r$.}\label{fig:stretch_sharing}
      \Description{The normalized average sharing in relation to the average stretch in the morning scenario for different values of r; the numbers at the points indicate r.}
    \end{figure}

    \definecolor{sta100}{HTML}{2171b5}
    \definecolor{sta10}{HTML}{6db1cc}
    \definecolor{sta1}{HTML}{c7dcf0}
    \definecolor{ff100}{HTML}{d94801}
    \definecolor{ff10}{HTML}{fd8d3c}
    \definecolor{ff1}{HTML}{fdd0a3}

    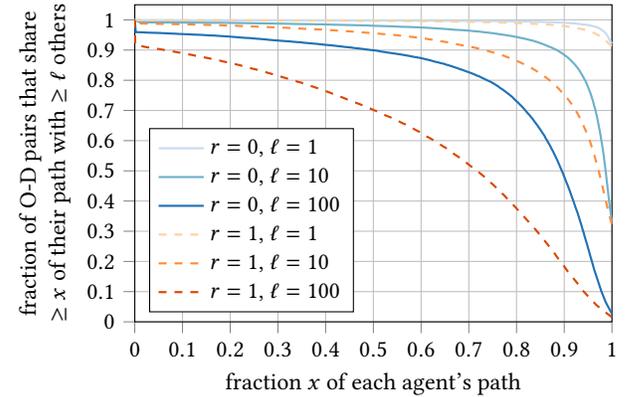
\begin{figure}[t]
      \captionsetup{skip=5pt}
      \centering
      \begin{tikzpicture}
        \begin{axis}[
          every axis y label/.append style={inner xsep=0pt, outer xsep=0pt},
          width=\columnwidth-0.55cm,
          height=5.8cm,
          xmin=0,
          xmax=1,
          xlabel={fraction $x$ of each agent's path},
          xtick={0,0.1,...,1},
          ymin=0,
          ymax=1.05,
          ylabel={fraction of O-D pairs that share\\$\ge x$ of their path with $\ge \ell$ others},
          ylabel style={align=center},
          ytick={0,0.1,...,1},
          legend cell align={left},
          legend style={at={(0.03,0.03)},anchor=south west},
          grid=major,
          xtick align=outside,
          ytick align=outside,
          xtick pos=left,
          ytick pos=left,
          ]
          \addplot[sta1, thick] table [x=x, y=y, col sep=comma] {sharingk1.csv};
          \addlegendentry{$r = 0$, $\ell=1$}
          \addplot[sta10, thick] table [x=x, y=y, col sep=comma] {sharingk10.csv};
          \addlegendentry{$r = 0$, $\ell=10$}
          \addplot[sta100, thick] table [x=x, y=y, col sep=comma] {sharingk100.csv};
          \addlegendentry{$r = 0$, $\ell=100$}
          
          \addplot[ff1, thick, dashed] table [x=x, y=y, col sep=comma] {ffsharingk1.csv};
          \addlegendentry{$r = 1$, $\ell=1$}
          \addplot[ff10, thick, dashed] table [x=x, y=y, col sep=comma] {ffsharingk10.csv};
          \addlegendentry{$r = 1$, $\ell=10$}
          \addplot[ff100, thick, dashed] table [x=x, y=y, col sep=comma] {ffsharingk100.csv};
          \addlegendentry{$r = 1$, $\ell=100$}
        \end{axis}
      \end{tikzpicture}
      \caption{
        Fraction of agents that share a fraction of at least $x$ of their travel time with at least $\ell$ other agents.  We use $\ell \in \{1, 10, 100\}$ and show values for $r = 0$ (high sharing) and $r = 1$ (free flow) in the \texttt{S-morn} scenario.
      }\label{fig:sharing_potential}
      \Description{Fraction of agents that share a fraction of at least x of their travel time with at least l other agents.  We use l in \{1, 10, 100\} and show values for r = 0 (high sharing) and r = 1 (free flow) in the \texttt{S-morn} scenario.}
    \end{figure}

    \subsection{Performance Evaluation}
    \label{sec:perf-eval}
    Our theoretical results prove convergence, but give no guarantees for how many iterations are necessary.
    Thus, in this section, we evaluate the number of iterations and running times in practice.

    \Cref{fig:iterations} shows the number of iterations depending on $r$.
    Smaller values of $r$ require more iterations but even for $r=0$, fewer than \num{20} iterations suffice.
    The number of agents appears to have no effect.
    
    To give an impression of the overall running times, the entire process with $r=0$ takes about one second for \texttt{S-morn} and \texttt{S-even}, ten seconds for \texttt{S-day}, and a minute for \texttt{S-week}, respectively. 

    \begin{figure}[t]
      \captionsetup{skip=5pt}
      \centering
      \begin{tikzpicture}
        \definecolor{green}{RGB}{0,109,44}
        \begin{axis}[
          width=\columnwidth,
          height=4.5cm,
          xmin=0,
          xmax=1,
          xlabel={selfishness parameter $r$},
          ymin=1,       
          ymax=22,
          ylabel={iterations},
          xtick={0,0.1,...,1},
          legend cell align={left},
          grid=major,
          xtick align=outside,
          ytick align=outside,
          xtick pos=left,
          ytick pos=left,
          ]
          \addplot[mark=square,red,mark size=2.5pt] table [x=r, y={morn}, col sep=comma] {iterations.csv};
          \addlegendentry{\texttt{S-morn}}
          \addplot[mark=o,green,mark size=2.5pt] table [x=r, y={even}, col sep=comma] {iterations.csv};
          \addlegendentry{\texttt{S-even}}
          \addplot[mark=triangle,blue,mark size=3pt] table [x=r, y={day}, col sep=comma] {iterations.csv};
          \addlegendentry{\texttt{S-day}}
          \addplot[mark=diamond,orange,mark size=3pt] table [x=r, y={week}, col sep=comma] {iterations.csv};
          \addlegendentry{\texttt{S-week}}
        \end{axis}
      \end{tikzpicture}
      \caption{The number of iterations for the STA best response to converge.
          The reported numbers include the last iteration that did not result in a change.}\label{fig:iterations}
      \Description{The number of iterations for the STA best response to converge. The reported numbers include the last iteration that did not result in a change.}
    \end{figure}
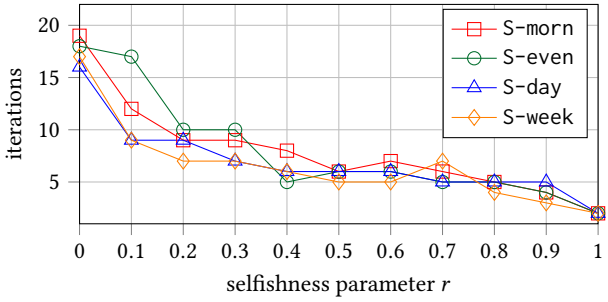
      
    In \Cref{fig:performance}, we can see that in later iterations, the shortest-path queries become noticeable faster.
    For $r=0$, the time spent on queries almost halves after only $6$ iterations.
    As the number of O-D pairs stays the same, this implies that the CCH (with perfect customization) can leverage the increased sharing and smaller number of relevant edges for faster queries.

    \begin{figure}[t]
      \captionsetup{skip=5pt}
      \centering
      \begin{tikzpicture}
        \begin{axis}[
          width=\columnwidth,
          height=4.5cm,
          xmin=0,
          xmax=37.5,
          xlabel={iteration\strut},
          ymin=0,
          ylabel={time [ms]},
          legend cell align={left},
          grid=major,
          xtick align=outside,
          ytick align=outside,
          xtick pos=left,
          ytick pos=left,
          ]
          \addplot[mark=square,red!25!blue,mark size=2.5pt] table [x=iteration, y={Time0.1}, col sep=comma] {iterations_time.csv};
          \addlegendentry{$r=0.1$}
          \addplot[mark=o,red!50!blue,mark size=2.5pt] table [x=iteration, y={Time0.01}, col sep=comma] {iterations_time.csv};
          \addlegendentry{$r=0.01$}
          \addplot[mark=triangle,red!75!blue,mark size=3pt] table [x=iteration, y={Time0.001}, col sep=comma] {iterations_time.csv};
          \addlegendentry{$r=0.001$}
          \addplot[mark=diamond,red,mark size=3pt] table [x=iteration, y={Time0.0}, col sep=comma] {iterations_time.csv};
          \addlegendentry{$r=0$}
        \end{axis}
      \end{tikzpicture}
      \caption{ Time spent for shortest-path computations and load accumulation per iteration in the morning scenario.
        }\label{fig:performance}
      \Description{Time spent for shortest-path computations and load accumulation per iteration in the morning scenario.}
    \end{figure}

    As claimed previously, we observe that shortest-path computations are indeed the bottleneck compared to customization.
    For instance, for $r = 0$, the smallest scenario (\texttt{S-morn}), and the fastest iteration, the time spent on shortest-path computations is roughly $\SI{37}{ms}$.
    Meanwhile, an average of only $\SI{10.5}{ms}$ is spent on customization per iteration.
    It thus makes sense to opt for a CCH variant with more efficient queries at the cost of slower customization.
    In the future, shortest-path queries can be optimized further, e.g., by employing thread- and instruction-level parallelism (cf.~\cite{buchhold2019real}).
    
    \subsection{Bus Line Planning}
    \label{subsec:experiments_buses}
    As a proof of concept for the merits of STA, we use STA in a simple bus line planning problem.
    We propose a greedy path-based line planning algorithm to find trunk bus lines for a feeder-trunk system.
    We show that bus lines based on STA paths reduce total vehicle travel time compared to lines based on free-flow (FF) paths. 
    Note that this proof-of-concept model is not optimized for realism but for simplicity while grasping the effects of sharing between travelers.

    \parheader{Bus Line Problem}
    We consider a line planning problem where we are given a road network and travel demands.
    The goal is to find bus lines subject to a bus operation time budget such that the total vehicle travel time (for buses and feeder vehicles) is minimized.
    We assume a road network $G=(V,E)$ with travel times $d(e)$ for every $e \in E$, a set of O-D pairs $X=\{(s_x, t_x)\}$, a bus operation time budget $B$, a fixed service frequency $f$ for every line, and an operation time window $[0,T]$ s.t. every journey fits within that window.

    We consider a bus line $L$ to be a simple path in $G$.
    Buses travel the length of $L$ from start to finish.
    Let $\lineLength{L}$ be the travel time from start to finish. 
    The total vehicle time needed to service $L$ during $[0,T]$ with frequency $f$ is $\lineVehTT{L} = T \cdot f \cdot \lineLength{L}$.

    Travelers may enter and exit a bus anywhere on the line.
    For any part of a journey that is not covered by bus lines, the traveler has to use a feeder (e.g., a taxi).
    A feeder is immediately available between any two places but it can never be shared with other travelers.

    Our goal is to minimize \emph{total vehicle operation time (TVOT)}, i.e., the sum of operation times of all buses and feeder vehicles.
    Without any buses, every traveler uses the feeder system to go directly from their origin to their destination without any sharing.
    This gives us a baseline TVOT that we then aim to reduce by introducing buses.
    More specifically, we want to find a set of bus lines $\mathcal{L}$ such that the TVOT is minimized but the sum of bus operation times fits into the bus operation time budget $B$, i.e., $\sum_{L \in \mathcal{L}} \lineVehTT{L} \le B$.  

    \parheader{Algorithm}
    We use a simple path-based bus line planning algorithm to compare lines based on STA paths to lines based on FF paths. 
    As input, the algorithm takes a set of paths $P$, containing a path $p_x$ from $s_x$ to $t_x$ for every O-D pair $x \in X$.
    Let $N_P(e)$ be the number of paths in $P$ that contain $e$.
    We greedily construct lines:

    A new line $L$ is started at the edge $e \in E$ with the largest $N_P(e)$.
    We extend $L$ forwards by iteratively appending edges at the end, always choosing the edge $e'$ with the largest $N_P(e')$.
    We stop extending $L$ forwards if there is no next edge with $N_P(e') > 0$.
    We then extend the line backward analogously.

    Then, for each path $p \in P$ that overlaps with $L$, we remove $p$ from $P$, slice the part covered by $L$ out of $p$, and add the remaining subpaths of $p$ back to $P$.
    We update $N_P(e)$ for $e \in E$ accordingly.
    Note that we do this for only a subset of paths such that a fixed seating capacity of buses is not exceeded anywhere on the line.

    We keep generating new lines until all traveler demand is covered by bus lines, i.e., until $P$ becomes empty.
    We use a knapsack solver to choose a subset of lines $\mathcal{L}$ that fits into the operation time budget $B$ and maximizes traveler coverage. 

    \begin{figure}[t!b]
        \captionsetup{skip=5pt}
        \centering
        \begin{tikzpicture}
            \begin{axis}[
            	every x tick label/.append style={text depth=0pt},
                width=\columnwidth-0.5cm,
                height=4.5cm,
                xmin=0,
                xlabel={bus budget $B$ [\si{h}]\strut},
                ymin=0,
                ytick={0, 2500, 5000, 7500, 10000},
                scaled ticks=false,
                ylabel={TVOT [\si{h}]},
                legend cell align={left},
                grid=major,
                xtick align=outside,
                ytick align=outside,
                xtick pos=left,
                ytick pos=left,
                ]
                \addplot[mark=pentagon,blue,mark size=3pt] table [x=budget, y={FF}, col sep=comma] {bus_eval.csv};
                \addlegendentry{$r=1$}
                \addplot[mark=o,red!33!blue,mark size=2.5pt] table [x=budget, y={0.01}, col sep=comma] {bus_eval.csv};
                \addlegendentry{$r=0.01$}
                \addplot[mark=diamond,red,mark size=3pt] table [x=budget, y={0}, col sep=comma] {bus_eval.csv};
                \addlegendentry{$r=0$}
            \end{axis}
        \end{tikzpicture}
        \caption{Total vehicle operation time (in hours) for varying bus budgets and $r \in \{0, 0.01, 1\}$.
        }
        \label{fig:bus_eval_ratio_budget_ridertt}
        \Description{Total vehicle operation time (in hours) for varying bus budgets and r in \{0, 0.01, 1\}}
    \end{figure}
    \parheader{Evaluation}
    We evaluate bus lines for the \texttt{S-morn} instance (cf.~\Cref{tab:scenarios}).
    We assume buses with a capacity of $80$ and a fixed service frequency of $f = 1/10\si{\minute}$.
    For the input paths $P$, we use STA paths with different values of the selfishness parameter $r$ (see~\Cref{subsec:experiments_inputs_and_methodology}), with $r=1$ being equivalent to using FF paths. 
    For simplicity's sake, we restrict each traveler to their input path and allow them to only use bus lines with which they overlap during line construction.

    In~\Cref{fig:bus_eval_ratio_budget_ridertt}, we show the TVOT for different bus budgets and different selfishness $r \in \{0, 0.01, 1\}$.
    Note that with a budget of $B=\SI{0}{h}$ and selfishness of $r=1$, travelers use only feeder vehicles on shortest paths, giving us the baseline TVOT of about $\SI{6600}{h}$.

    Unsurprisingly, bus lines reduce the TVOT for any value of $r$.
    Thus, we consider the minimal TVOT for different values of $r$. 

    As previously stated in~\Cref{subsec:experiments_evaluation}, STA paths are longer with smaller selfishness.
    Since we restrict travelers to their input paths even when using feeder vehicles, these longer paths result in a larger TVOT with smaller selfishness $r$ if the bus budget is small.
    This effect is particularly strong for $r=0$, which is why we focus on the comparison between $r=0.01$ and $r=1$ (FF) in the following.

    Given a large enough budget, the improved sharing with STA leads to better bus lines which make up for the longer paths and reduce the TVOT to smaller values than FF-based lines.
    At a bus budget of $B=\SI{1800}{h}$, STA-based lines with $r=0.01$ lead to a minimum TVOT of $\SI{2140}{h}$ while FF-based lines require at least $\SI{2641}{h}$.
    Thus, the FF-based lines require more than twice as much additional feeder service ($\SI{841}{h}$) compared to the STA-based lines ($\SI{340}{h}$). 
    This reduction of about $\SI{500}{h}$ is equivalent to $7.5\%$ of the baseline TVOT for using only feeder vehicles.

    \section{Conclusions and Future Work}
    We have demonstrated that synergistic traffic assignment is an interesting game-theoretic concept that allows fast computations of equilibria. 
    These equilibria seem relevant for applications in the context of envisioned software-defined transportation systems. 
    A lot remains to be done there. 
    Cost functions could try to better model costs and behavior of agents (perhaps using simulation or machine learning). 
    More sophisticated algorithms to interpret the equilibria would take into account actually available vehicles and constraints on using them. 
    Perhaps more interestingly, agents could use equilibrium information (based on historical data) to look for routes and vehicles worth sharing. 
    This brings us back to the interesting algorithmic question of how STA equilibria can be computed/approximated in a dynamic fashion.

\begin{acks}
	This work was supported by funding from the pilot program Core Informatics of the Helmholtz Association (HGF).
	This paper was created in the “Country 2 City - Bridge” project of the „German Center for Future Mobility“, which is funded by the German Federal Ministry for Digital and Transport.
\end{acks}

\bibliographystyle{ACM-Reference-Format} 
\bibliography{references}


\begin{thebibliography}{36}


\ifx \showCODEN    \undefined \def \showCODEN     #1{\unskip}     \fi
\ifx \showDOI      \undefined \def \showDOI       #1{#1}\fi
\ifx \showISBNx    \undefined \def \showISBNx     #1{\unskip}     \fi
\ifx \showISBNxiii \undefined \def \showISBNxiii  #1{\unskip}     \fi
\ifx \showISSN     \undefined \def \showISSN      #1{\unskip}     \fi
\ifx \showLCCN     \undefined \def \showLCCN      #1{\unskip}     \fi
\ifx \shownote     \undefined \def \shownote      #1{#1}          \fi
\ifx \showarticletitle \undefined \def \showarticletitle #1{#1}   \fi
\ifx \showURL      \undefined \def \showURL       {\relax}        \fi
\providecommand\bibfield[2]{#2}
\providecommand\bibinfo[2]{#2}
\providecommand\natexlab[1]{#1}
\providecommand\showeprint[2][]{arXiv:#2}

\bibitem[\protect\citeauthoryear{Alonso-Mora, Samaranayake, Wallar, Frazzoli,
  and Rus}{Alonso-Mora et~al\mbox{.}}{2017}]%
        {alonsomora2017ondemand}
\bibfield{author}{\bibinfo{person}{Javier Alonso-Mora},
  \bibinfo{person}{Samitha Samaranayake}, \bibinfo{person}{Alex Wallar},
  \bibinfo{person}{Emilio Frazzoli}, {and} \bibinfo{person}{Daniela Rus}.}
  \bibinfo{year}{2017}\natexlab{}.
\newblock \showarticletitle{On-demand High-Capacity Ride-sharing via Dynamic
  Trip-Vehicle Assignment}.
\newblock \bibinfo{journal}{\emph{Proceedings of the National Academy of
  Sciences}} \bibinfo{volume}{114}, \bibinfo{number}{3} (\bibinfo{year}{2017}),
  \bibinfo{pages}{462--467}.
\newblock
\urldef\tempurl%
\url{https://doi.org/10.1073/pnas.1611675114}
\showDOI{\tempurl}


\bibitem[\protect\citeauthoryear{Bar-Gera}{Bar-Gera}{2002}]%
        {bargera2002origin}
\bibfield{author}{\bibinfo{person}{Hillel Bar-Gera}.}
  \bibinfo{year}{2002}\natexlab{}.
\newblock \showarticletitle{Origin-Based Algorithm for the Traffic Assignment
  Problem}.
\newblock \bibinfo{journal}{\emph{Transportation Science}}
  \bibinfo{volume}{36}, \bibinfo{number}{4} (\bibinfo{date}{11}
  \bibinfo{year}{2002}), \bibinfo{pages}{398--417}.
\newblock
\urldef\tempurl%
\url{https://doi.org/10.1287/trsc.36.4.398.549}
\showDOI{\tempurl}


\bibitem[\protect\citeauthoryear{Bast, Delling, Goldberg,
  M{\"{u}}ller{-}Hannemann, Pajor, Sanders, Wagner, and Werneck}{Bast
  et~al\mbox{.}}{2016}]%
        {DBLP:series/lncs/BastDGMPSWW16}
\bibfield{author}{\bibinfo{person}{Hannah Bast}, \bibinfo{person}{Daniel
  Delling}, \bibinfo{person}{Andrew~V. Goldberg}, \bibinfo{person}{Matthias
  M{\"{u}}ller{-}Hannemann}, \bibinfo{person}{Thomas Pajor},
  \bibinfo{person}{Peter Sanders}, \bibinfo{person}{Dorothea Wagner}, {and}
  \bibinfo{person}{Renato~F. Werneck}.} \bibinfo{year}{2016}\natexlab{}.
\newblock \showarticletitle{Route Planning in Transportation Networks}.
\newblock In \bibinfo{booktitle}{\emph{Algorithm Engineering - Selected Results
  and Surveys}}, \bibfield{editor}{\bibinfo{person}{Lasse Kliemann} {and}
  \bibinfo{person}{Peter Sanders}} (Eds.). \bibinfo{series}{Lecture Notes in
  Computer Science}, Vol.~\bibinfo{volume}{9220}.
  \bibinfo{publisher}{Springer}, \bibinfo{pages}{19--80}.
\newblock
\urldef\tempurl%
\url{https://doi.org/10.1007/978-3-319-49487-6_2}
\showDOI{\tempurl}


\bibitem[\protect\citeauthoryear{Beckmann, McGuire, and Winsten}{Beckmann
  et~al\mbox{.}}{1955}]%
        {beckmann1956studies}
\bibfield{author}{\bibinfo{person}{Martin Beckmann},
  \bibinfo{person}{Charles~B. McGuire}, {and} \bibinfo{person}{Christopher~B.
  Winsten}.} \bibinfo{year}{1955}\natexlab{}.
\newblock \bibinfo{booktitle}{\emph{Studies in the Economics of
  Transportation}}.
\newblock \bibinfo{type}{{T}echnical {R}eport}. \bibinfo{institution}{The Rand
  Corporation}.
\newblock


\bibitem[\protect\citeauthoryear{Bläsius, Buchhold, Wagner, Zeitz, and
  Zündorf}{Bläsius et~al\mbox{.}}{2025}]%
        {Blaesius2025}
\bibfield{author}{\bibinfo{person}{Thomas Bläsius}, \bibinfo{person}{Valentin
  Buchhold}, \bibinfo{person}{Dorothea Wagner}, \bibinfo{person}{Tim Zeitz},
  {and} \bibinfo{person}{Michael Zündorf}.} \bibinfo{year}{2025}\natexlab{}.
\newblock \showarticletitle{Customizable Contraction Hierarchies -- A Survey
  (to be published)}.
\newblock \bibinfo{journal}{\emph{TBD}} (\bibinfo{year}{2025}).
\newblock


\bibitem[\protect\citeauthoryear{Buchhold, Sanders, and Wagner}{Buchhold
  et~al\mbox{.}}{2019}]%
        {buchhold2019real}
\bibfield{author}{\bibinfo{person}{Valentin Buchhold}, \bibinfo{person}{Peter
  Sanders}, {and} \bibinfo{person}{Dorothea Wagner}.}
  \bibinfo{year}{2019}\natexlab{}.
\newblock \showarticletitle{Real-time Traffic Assignment Using Engineered
  Customizable Contraction Hierarchies}.
\newblock \bibinfo{journal}{\emph{{ACM} Journal of Experimental Algorithmics}}
  \bibinfo{volume}{24}, \bibinfo{number}{1} (\bibinfo{year}{2019}),
  \bibinfo{pages}{2.4:1--2.4:28}.
\newblock
\urldef\tempurl%
\url{https://doi.org/10.1145/3362693}
\showDOI{\tempurl}


\bibitem[\protect\citeauthoryear{Buchhold, Sanders, and Wagner}{Buchhold
  et~al\mbox{.}}{2021}]%
        {buchhold2021fast}
\bibfield{author}{\bibinfo{person}{Valentin Buchhold}, \bibinfo{person}{Peter
  Sanders}, {and} \bibinfo{person}{Dorothea Wagner}.}
  \bibinfo{year}{2021}\natexlab{}.
\newblock \showarticletitle{Fast, Exact and Scalable Dynamic Ridesharing}. In
  \bibinfo{booktitle}{\emph{2021 Proceedings of the Symposium on Algorithm
  Engineering and Experiments (ALENEX)}} (Philadelphia, PA).
  \bibinfo{publisher}{{SIAM}}, \bibinfo{pages}{98--112}.
\newblock
\urldef\tempurl%
\url{https://doi.org/10.1137/1.9781611976472.8}
\showDOI{\tempurl}


\bibitem[\protect\citeauthoryear{Dafermos}{Dafermos}{1968}]%
        {dafermos1968traffic}
\bibfield{author}{\bibinfo{person}{Styliani-Stella~Constantine Dafermos}.}
  \bibinfo{year}{1968}\natexlab{}.
\newblock \emph{\bibinfo{title}{Traffic Assignment and Resource Allocation in
  Transportation Networks}}.
\newblock Ph.D. thesis. \bibinfo{school}{Johns Hopkins University}.
\newblock


\bibitem[\protect\citeauthoryear{Delling, Goldberg, Pajor, and Werneck}{Delling
  et~al\mbox{.}}{2017}]%
        {DBLP:journals/transci/DellingGPW17}
\bibfield{author}{\bibinfo{person}{Daniel Delling}, \bibinfo{person}{Andrew~V.
  Goldberg}, \bibinfo{person}{Thomas Pajor}, {and} \bibinfo{person}{Renato~F.
  Werneck}.} \bibinfo{year}{2017}\natexlab{}.
\newblock \showarticletitle{Customizable Route Planning in Road Networks}.
\newblock \bibinfo{journal}{\emph{Transportation Science}}
  \bibinfo{volume}{51}, \bibinfo{number}{2} (\bibinfo{year}{2017}),
  \bibinfo{pages}{566--591}.
\newblock
\urldef\tempurl%
\url{https://doi.org/10.1287/trsc.2014.0579}
\showDOI{\tempurl}


\bibitem[\protect\citeauthoryear{Dial}{Dial}{2006}]%
        {dial2006path}
\bibfield{author}{\bibinfo{person}{Robert~B. Dial}.}
  \bibinfo{year}{2006}\natexlab{}.
\newblock \showarticletitle{A Path-based User-Equilibrium Traffic Assignment
  Algorithm that Obviates Path Storage and Enumeration}.
\newblock \bibinfo{journal}{\emph{Transportation Research Part B:
  Methodological}} \bibinfo{volume}{40}, \bibinfo{number}{10}
  (\bibinfo{year}{2006}), \bibinfo{pages}{917--936}.
\newblock
\urldef\tempurl%
\url{https://doi.org/10.1016/j.trb.2006.02.008}
\showDOI{\tempurl}


\bibitem[\protect\citeauthoryear{Dibbelt, Strasser, and Wagner}{Dibbelt
  et~al\mbox{.}}{2016}]%
        {DBLP:journals/jea/DibbeltSW16}
\bibfield{author}{\bibinfo{person}{Julian Dibbelt}, \bibinfo{person}{Ben
  Strasser}, {and} \bibinfo{person}{Dorothea Wagner}.}
  \bibinfo{year}{2016}\natexlab{}.
\newblock \showarticletitle{Customizable Contraction Hierarchies}.
\newblock \bibinfo{journal}{\emph{{ACM} Journal of Experimental Algorithmics}}
  \bibinfo{volume}{21}, \bibinfo{number}{1} (\bibinfo{year}{2016}),
  \bibinfo{pages}{1.5:1--1.5:49}.
\newblock
\urldef\tempurl%
\url{https://doi.org/10.1145/2886843}
\showDOI{\tempurl}


\bibitem[\protect\citeauthoryear{Durand and Gaujal}{Durand and Gaujal}{2016}]%
        {Durand2016}
\bibfield{author}{\bibinfo{person}{Stéphane Durand} {and}
  \bibinfo{person}{Bruno Gaujal}.} \bibinfo{year}{2016}\natexlab{}.
\newblock \showarticletitle{Complexity and Optimality of the Best Response
  Algorithm in Random Potential Games}.
\newblock \bibinfo{journal}{\emph{Lecture Notes in Computer Science}}
  \bibinfo{volume}{9928} (\bibinfo{year}{2016}), \bibinfo{pages}{40--51}.
\newblock
\urldef\tempurl%
\url{https://doi.org/10.1007/978-3-662-53354-3_4}
\showDOI{\tempurl}


\bibitem[\protect\citeauthoryear{Fabrikant, Papadimitriou, and
  Talwar}{Fabrikant et~al\mbox{.}}{2004}]%
        {Fabrikant2004}
\bibfield{author}{\bibinfo{person}{Alex Fabrikant}, \bibinfo{person}{Christos
  Papadimitriou}, {and} \bibinfo{person}{Kunal Talwar}.}
  \bibinfo{year}{2004}\natexlab{}.
\newblock \showarticletitle{The Complexity of Pure {Nash} Equilibria}. In
  \bibinfo{booktitle}{\emph{Proceedings of the 36th Annual ACM Symposium on
  Theory of Computing}} (Chicago, IL). \bibinfo{publisher}{{ACM}},
  \bibinfo{pages}{604--612}.
\newblock
\urldef\tempurl%
\url{https://doi.org/10.1145/1007352.1007445}
\showDOI{\tempurl}


\bibitem[\protect\citeauthoryear{Fielbaum, Bai, and Alonso-Mora}{Fielbaum
  et~al\mbox{.}}{2021}]%
        {fielbaum2021ondemand}
\bibfield{author}{\bibinfo{person}{Andres Fielbaum}, \bibinfo{person}{Xiaoshan
  Bai}, {and} \bibinfo{person}{Javier Alonso-Mora}.}
  \bibinfo{year}{2021}\natexlab{}.
\newblock \showarticletitle{On-demand Ridesharing with Optimized Pick-up and
  Drop-off Walking Locations}.
\newblock \bibinfo{journal}{\emph{Transportation Research Part C: Emerging
  Technologies}}  \bibinfo{volume}{126} (\bibinfo{year}{2021}).
\newblock
\urldef\tempurl%
\url{https://doi.org/10.1016/j.trc.2021.103061}
\showDOI{\tempurl}


\bibitem[\protect\citeauthoryear{Florian, Constantin, and Florian}{Florian
  et~al\mbox{.}}{2009}]%
        {florian2011new}
\bibfield{author}{\bibinfo{person}{Michael Florian}, \bibinfo{person}{Isabelle
  Constantin}, {and} \bibinfo{person}{Dan Florian}.}
  \bibinfo{year}{2009}\natexlab{}.
\newblock \showarticletitle{A New Look at Projected Gradient Method for
  Equilibrium Assignment}.
\newblock \bibinfo{journal}{\emph{Transportation Research Record}}
  \bibinfo{volume}{2090}, \bibinfo{number}{1} (\bibinfo{year}{2009}),
  \bibinfo{pages}{10--16}.
\newblock
\urldef\tempurl%
\url{https://doi.org/10.3141/2090-02}
\showDOI{\tempurl}


\bibitem[\protect\citeauthoryear{Florian and Hearn}{Florian and Hearn}{1995}]%
        {florian1995chapter}
\bibfield{author}{\bibinfo{person}{Michael Florian} {and}
  \bibinfo{person}{Donald Hearn}.} \bibinfo{year}{1995}\natexlab{}.
\newblock \showarticletitle{Network Equilibrium Models and Algorithms}.
\newblock In \bibinfo{booktitle}{\emph{Network Routing}}.
  \bibinfo{series}{Handbooks in Operations Research and Management Science},
  Vol.~\bibinfo{volume}{8}. \bibinfo{publisher}{Elsevier}, Chapter~6,
  \bibinfo{pages}{485--550}.
\newblock
\urldef\tempurl%
\url{https://doi.org/10.1016/S0927-0507(05)80110-0}
\showDOI{\tempurl}


\bibitem[\protect\citeauthoryear{Frank and Wolfe}{Frank and Wolfe}{1956}]%
        {frank1956algorithm}
\bibfield{author}{\bibinfo{person}{Marguerite Frank} {and}
  \bibinfo{person}{Philip Wolfe}.} \bibinfo{year}{1956}\natexlab{}.
\newblock \showarticletitle{An Algorithm for Quadratic Programming}.
\newblock \bibinfo{journal}{\emph{Naval research logistics quarterly}}
  \bibinfo{volume}{3}, \bibinfo{number}{1-2} (\bibinfo{year}{1956}),
  \bibinfo{pages}{95--110}.
\newblock


\bibitem[\protect\citeauthoryear{Gottesbüren, Hamann, Uhl, and
  Wagner}{Gottesbüren et~al\mbox{.}}{2019}]%
        {gottesbueren2019faster}
\bibfield{author}{\bibinfo{person}{Lars Gottesbüren}, \bibinfo{person}{Michael
  Hamann}, \bibinfo{person}{Tim~Niklas Uhl}, {and} \bibinfo{person}{Dorothea
  Wagner}.} \bibinfo{year}{2019}\natexlab{}.
\newblock \showarticletitle{Faster and Better Nested Dissection Orders for
  Customizable Contraction Hierarchies}.
\newblock \bibinfo{journal}{\emph{Algorithms}} \bibinfo{volume}{12},
  \bibinfo{number}{9} (\bibinfo{year}{2019}).
\newblock
\urldef\tempurl%
\url{https://doi.org/10.3390/a12090196}
\showDOI{\tempurl}


\bibitem[\protect\citeauthoryear{Harder, Krogmann, Lenzner, and
  Skopalik}{Harder et~al\mbox{.}}{2023}]%
        {Strat_Resour_Selec_Homop_Agent-Gadea23}
\bibfield{author}{\bibinfo{person}{Jonathan~Gadea Harder},
  \bibinfo{person}{Simon Krogmann}, \bibinfo{person}{Pascal Lenzner}, {and}
  \bibinfo{person}{Alexander Skopalik}.} \bibinfo{year}{2023}\natexlab{}.
\newblock \showarticletitle{Strategic Resource Selection with Homophilic
  Agents}. In \bibinfo{booktitle}{\emph{Proceedings of the Thirty-Second
  International Joint Conference on Artificial Intelligence, {IJCAI-23}}},
  \bibfield{editor}{\bibinfo{person}{Edith Elkind}} (Ed.).
  \bibinfo{publisher}{{IJCAI}}, \bibinfo{pages}{2701--2709}.
\newblock
\urldef\tempurl%
\url{https://doi.org/10.24963/ijcai.2023/301}
\showDOI{\tempurl}


\bibitem[\protect\citeauthoryear{Hershdorfer}{Hershdorfer}{1966}]%
        {Hershdorfer1966}
\bibfield{author}{\bibinfo{person}{Alan~M. Hershdorfer}.}
  \bibinfo{year}{1966}\natexlab{}.
\newblock \showarticletitle{Predicting the Equilibrium of Supply and Demand:
  {Location} Theory and Transportation Network Flow Models}. In
  \bibinfo{booktitle}{\emph{Papers - Seventh Annual Meeting "{Transportation} -
  {An} Emerging Science"}} (San Francisco). \bibinfo{publisher}{Transportation
  Research Forum}.
\newblock


\bibitem[\protect\citeauthoryear{Ho, Szeto, Kuo, Leung, Petering, and Tou}{Ho
  et~al\mbox{.}}{2018}]%
        {ho2018survey}
\bibfield{author}{\bibinfo{person}{Sin~C. Ho}, \bibinfo{person}{W.Y. Szeto},
  \bibinfo{person}{Yong-Hong Kuo}, \bibinfo{person}{Janny~M.Y. Leung},
  \bibinfo{person}{Matthew Petering}, {and} \bibinfo{person}{Terence~W.H.
  Tou}.} \bibinfo{year}{2018}\natexlab{}.
\newblock \showarticletitle{A Survey of Dial-a-Ride Problems: Literature Review
  and Recent Developments}.
\newblock \bibinfo{journal}{\emph{Transportation Research Part B:
  Methodological}}  \bibinfo{volume}{111} (\bibinfo{year}{2018}),
  \bibinfo{pages}{395--421}.
\newblock
\urldef\tempurl%
\url{https://doi.org/10.1016/j.trb.2018.02.001}
\showDOI{\tempurl}


\bibitem[\protect\citeauthoryear{Jayakrishnan, Tsai, Prashker, and
  Rajadhyaksha}{Jayakrishnan et~al\mbox{.}}{1994}]%
        {jayakrishnan1994faster}
\bibfield{author}{\bibinfo{person}{R. Jayakrishnan}, \bibinfo{person}{Wei~K.
  Tsai}, \bibinfo{person}{Joseph~N. Prashker}, {and} \bibinfo{person}{Subodh
  Rajadhyaksha}.} \bibinfo{year}{1994}\natexlab{}.
\newblock \showarticletitle{Faster Path-based Algorithm for Traffic
  Assignment}. In \bibinfo{booktitle}{\emph{73rd Annual Meeting of the
  Transportation Research Board}} (Washington, DC).
  \bibinfo{publisher}{{SAGE}}, \bibinfo{pages}{75--83}.
\newblock


\bibitem[\protect\citeauthoryear{Kumar and Peeta}{Kumar and Peeta}{2011}]%
        {kumar2011improved}
\bibfield{author}{\bibinfo{person}{Amit Kumar} {and} \bibinfo{person}{Srinivas
  Peeta}.} \bibinfo{year}{2011}\natexlab{}.
\newblock \showarticletitle{An Improved Social Pressure Algorithm for Static
  Deterministic User Equilibrium Traffic Assignment Problem}. In
  \bibinfo{booktitle}{\emph{90th Annual Meeting of the Transportation Research
  Board}} (Washington, DC). \bibinfo{publisher}{{SAGE}}.
\newblock
\urldef\tempurl%
\url{https://trid.trb.org/View/1091267}
\showURL{%
\tempurl}


\bibitem[\protect\citeauthoryear{Laupichler and Sanders}{Laupichler and
  Sanders}{2024}]%
        {laupichler2024fast}
\bibfield{author}{\bibinfo{person}{Moritz Laupichler} {and}
  \bibinfo{person}{Peter Sanders}.} \bibinfo{year}{2024}\natexlab{}.
\newblock \showarticletitle{Fast Many-to-Many Routing for Dynamic Taxi Sharing
  with Meeting Points}. In \bibinfo{booktitle}{\emph{Symposium on Algorithm
  Engineering and Experiments (ALENEX)}} (Alexandria, VA).
  \bibinfo{publisher}{{SIAM}}, \bibinfo{pages}{74--90}.
\newblock
\urldef\tempurl%
\url{https://doi.org/10.1137/1.9781611977929.6}
\showDOI{\tempurl}


\bibitem[\protect\citeauthoryear{Mallig, Kagerbauer, and Vortisch}{Mallig
  et~al\mbox{.}}{2013}]%
        {mallig2013mobitopp}
\bibfield{author}{\bibinfo{person}{Nicolai Mallig}, \bibinfo{person}{Martin
  Kagerbauer}, {and} \bibinfo{person}{Peter Vortisch}.}
  \bibinfo{year}{2013}\natexlab{}.
\newblock \showarticletitle{{mobiTopp}--A Modular Agent-based Travel Demand
  Modelling Framework}.
\newblock \bibinfo{journal}{\emph{Procedia Computer Science}}
  \bibinfo{volume}{19} (\bibinfo{year}{2013}), \bibinfo{pages}{854--859}.
\newblock
\urldef\tempurl%
\url{https://doi.org/10.1016/j.procs.2013.06.114}
\showDOI{\tempurl}


\bibitem[\protect\citeauthoryear{Mallig and Vortisch}{Mallig and
  Vortisch}{2015}]%
        {mallig2015modeling}
\bibfield{author}{\bibinfo{person}{Nicolai Mallig} {and} \bibinfo{person}{Peter
  Vortisch}.} \bibinfo{year}{2015}\natexlab{}.
\newblock \showarticletitle{Modeling Car Passenger Trips in {mobiTopp}}.
\newblock \bibinfo{journal}{\emph{Procedia Computer Science}}
  \bibinfo{volume}{52} (\bibinfo{year}{2015}), \bibinfo{pages}{938--943}.
\newblock
\urldef\tempurl%
\url{https://doi.org/10.1016/j.procs.2015.05.169}
\showDOI{\tempurl}


\bibitem[\protect\citeauthoryear{Monderer and Shapley}{Monderer and
  Shapley}{1996}]%
        {Monderer1996}
\bibfield{author}{\bibinfo{person}{Dov Monderer} {and}
  \bibinfo{person}{Lloyd~S. Shapley}.} \bibinfo{year}{1996}\natexlab{}.
\newblock \showarticletitle{Potential Games}.
\newblock \bibinfo{journal}{\emph{Games and Economic Behavior}}
  \bibinfo{volume}{14} (\bibinfo{date}{5} \bibinfo{year}{1996}),
  \bibinfo{pages}{124--143}.
\newblock
Issue 1.
\urldef\tempurl%
\url{https://doi.org/10.1006/game.1996.0044}
\showDOI{\tempurl}


\bibitem[\protect\citeauthoryear{Nie}{Nie}{2010}]%
        {nie2010class}
\bibfield{author}{\bibinfo{person}{Yu~Marco Nie}.}
  \bibinfo{year}{2010}\natexlab{}.
\newblock \showarticletitle{A Class of Bush-based Algorithms for the Traffic
  Assignment Problem}.
\newblock \bibinfo{journal}{\emph{Transportation Research Part B:
  Methodological}} \bibinfo{volume}{44}, \bibinfo{number}{1}
  (\bibinfo{year}{2010}), \bibinfo{pages}{73--89}.
\newblock
\urldef\tempurl%
\url{https://doi.org/10.1016/j.trb.2009.06.005}
\showDOI{\tempurl}


\bibitem[\protect\citeauthoryear{Owen}{Owen}{2013}]%
        {Owen2013}
\bibfield{author}{\bibinfo{person}{Guillermo Owen}.}
  \bibinfo{year}{2013}\natexlab{}.
\newblock \bibinfo{booktitle}{\emph{Game Theory} (\bibinfo{edition}{4th} ed.)}.
\newblock \bibinfo{publisher}{Emerald Group}.
\newblock


\bibitem[\protect\citeauthoryear{Perederieieva, Ehrgott, Raith, and
  Wang}{Perederieieva et~al\mbox{.}}{2015}]%
        {perederieieva2015framework}
\bibfield{author}{\bibinfo{person}{Olga Perederieieva},
  \bibinfo{person}{Matthias Ehrgott}, \bibinfo{person}{Andrea Raith}, {and}
  \bibinfo{person}{Judith~Y.T. Wang}.} \bibinfo{year}{2015}\natexlab{}.
\newblock \showarticletitle{A Framework for and Empirical Study of Algorithms
  for Traffic Assignment}.
\newblock \bibinfo{journal}{\emph{Computers \& Operations Research}}
  \bibinfo{volume}{54} (\bibinfo{year}{2015}), \bibinfo{pages}{90--107}.
\newblock
\urldef\tempurl%
\url{https://doi.org/10.1016/j.cor.2014.08.024}
\showDOI{\tempurl}


\bibitem[\protect\citeauthoryear{Rosenthal}{Rosenthal}{1973}]%
        {rosenthal1973class}
\bibfield{author}{\bibinfo{person}{Robert~W. Rosenthal}.}
  \bibinfo{year}{1973}\natexlab{}.
\newblock \showarticletitle{A Class of Games Possessing Pure-strategy {Nash}
  Equilibria}.
\newblock \bibinfo{journal}{\emph{International Journal of Game Theory}}
  \bibinfo{volume}{2} (\bibinfo{year}{1973}), \bibinfo{pages}{65--67}.
\newblock
\urldef\tempurl%
\url{https://doi.org/10.1007/BF01737559}
\showDOI{\tempurl}


\bibitem[\protect\citeauthoryear{Schlaich, Heidl, and Pohlner}{Schlaich
  et~al\mbox{.}}{2011}]%
        {schlaich2011verkehrsmodellierung}
\bibfield{author}{\bibinfo{person}{Johannes Schlaich}, \bibinfo{person}{Udo
  Heidl}, {and} \bibinfo{person}{Regine Pohlner}.}
  \bibinfo{year}{2011}\natexlab{}.
\newblock \showarticletitle{{Verkehrsmodellierung für die Region
  Stuttgart--Schlussbericht}}.
\newblock \bibinfo{journal}{\emph{Unpublished Manuscript}}
  (\bibinfo{year}{2011}).
\newblock


\bibitem[\protect\citeauthoryear{Sheffi}{Sheffi}{1985}]%
        {sheffi1985urban}
\bibfield{author}{\bibinfo{person}{Yosef Sheffi}.}
  \bibinfo{year}{1985}\natexlab{}.
\newblock \bibinfo{booktitle}{\emph{Urban Transportation Networks}}.
\newblock \bibinfo{publisher}{Prentice-Hall}.
\newblock


\bibitem[\protect\citeauthoryear{Stuttgart}{Stuttgart}{2011}]%
        {stuttgart2011mobilitat}
\bibfield{author}{\bibinfo{person}{Verband~Region Stuttgart}.}
  \bibinfo{year}{2011}\natexlab{}.
\newblock \showarticletitle{{Mobilität und Verkehr in der Region Stuttgart
  2009/2010: Regionale Haushaltsbefragung zum Verkehrsverhalten}}.
\newblock \bibinfo{journal}{\emph{Schriftenreihe Verband Region Stuttgart}}
  \bibinfo{volume}{29} (\bibinfo{year}{2011}), \bibinfo{pages}{1--138}.
\newblock


\bibitem[\protect\citeauthoryear{Wardrop}{Wardrop}{1952}]%
        {wardrop1952some}
\bibfield{author}{\bibinfo{person}{John~Glen Wardrop}.}
  \bibinfo{year}{1952}\natexlab{}.
\newblock \showarticletitle{Some Theoretical Aspects of Road Traffic Research}.
\newblock \bibinfo{journal}{\emph{Proceedings of the Institution of Civil
  Engineers}} \bibinfo{volume}{1}, \bibinfo{number}{3} (\bibinfo{year}{1952}),
  \bibinfo{pages}{325--362}.
\newblock
\urldef\tempurl%
\url{https://doi.org/10.1680/ipeds.1952.11259}
\showDOI{\tempurl}


\bibitem[\protect\citeauthoryear{Zhou and Martimo}{Zhou and Martimo}{2010}]%
        {zhou2010computational}
\bibfield{author}{\bibinfo{person}{Zhong Zhou} {and} \bibinfo{person}{Matthew
  Martimo}.} \bibinfo{year}{2010}\natexlab{}.
\newblock \showarticletitle{Computational Study of Alternative Methods for
  Static Traffic Equilibrium Assignment}. In
  \bibinfo{booktitle}{\emph{Proceedings of the 12th world conference on
  transport research (WCTRS)}} (Lisbon, Portugal). \bibinfo{pages}{1--15}.
\newblock


\end{thebibliography}


\appendix

\section{Appendix: System Optima}\label{a:system_optima}
  Since sharing potential in the STA setting may be utilized by a central institution, e.g. by providing a public transit system, there may also be merit in considering system optima.
  Unlike equilibria, which consider the situation from the viewpoint of passengers, seeking to optimize their personal gain from sharing while not increasing the trip's duration too much, system optima describe the situation from the perspective of a central institution seeking to maximize sharing even at the expense of some individuals to maximize the overall utility of sharing.
  The following results hold independent of the specific update variants discussed in \Cref{tab:variants_matrix}.
  
  First, note that both price of anarchy and price of stability for STA are unbounded, since a system optimum may require all agents to use the same resource to massively reduce its cost, but a single agent may always diverge from that strategy in favor of a slightly cheaper resource.
  As such, individual passengers may increase the system's cost arbitrarily with their selfish behavior.
  Note that this is the same as for ATA, where the price of stability is unbounded for similar reasons.
  However, it is unclear whether these selfish passengers are even accounted for realistically in a system optimum, since they may just choose to not participate in a shared system that they themselves do not profit sufficiently from.
  
  A further advantage of working with equilibria over system optima is that we can efficiently compute them, while finding a system optimum for STA is NP-hard.
  \begin{theorem}
    \label{trm:opt_STA_hard}
    Finding a system optimum for STA is NP hard.
  \end{theorem}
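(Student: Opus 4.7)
The plan is to prove NP-hardness by a polynomial-time reduction from the Minimum Steiner Forest problem, which is well-known to be NP-hard. Given a Steiner Forest instance consisting of a graph $G = (V, E)$ with non-negative edge weights $w_e$ and demand pairs $(s_1, t_1), \dots, (s_k, t_k)$, I would construct an STA instance on the same graph with the same O-D pairs by defining the cost function
\begin{equation*}
  c_e(\ell) = \begin{cases} w_e & \text{if } \ell \le 1, \\ w_e/\ell & \text{if } \ell \ge 2. \end{cases}
\end{equation*}
This function is non-increasing in $\ell$ and hence valid for a synergistic congestion game. The construction is clearly polynomial in the size of the input.

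The crux of the proof will be to show that the total system cost of any strategy profile $S$ equals the total weight of the edges carrying at least one agent. Indeed, an edge $e$ with load $\ell_e(S) \ge 1$ contributes $\ell_e(S) \cdot c_e(\ell_e(S)) = w_e$ to the total cost, while edges of load $0$ contribute nothing. Therefore the union $H$ of the paths in any system optimum is a subgraph of $G$ that (i) connects every $s_i$ to $t_i$ and (ii) has minimum total weight among such subgraphs, i.e., it is a Minimum Steiner Forest. Conversely, given an optimal Steiner Forest $H^*$, for each demand pair $(s_i, t_i)$ we can extract an $s_it_i$-path inside $H^*$, yielding a strategy profile whose total STA cost matches the Steiner Forest cost. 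Hence a polynomial-time algorithm for STA system optima would also solve Minimum Steiner Forest, giving the NP-hardness.

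I expect the main obstacle to be technical bookkeeping rather than a substantial mathematical hurdle. In particular, one must verify that the chosen cost function is synergistic (non-increasing), that the objective equalities hold exactly with this function, and that paths used by different agents that overlap multiple times on the same edge are correctly handled by the load-based definition of $\cost$. A minor subtlety is that STA strategies are required to be simple $s_it_i$-paths, whereas a Steiner forest is merely a subgraph. This is resolved by noting that within any connected subgraph containing $s_i$ and $t_i$ one can extract a simple $s_it_i$-path, and adding such a path only uses edges already present in the forest, so it does not increase the total cost under our cost function. With these details in place, the equivalence between STA system optima and Minimum Steiner Forest solutions is immediate, and the theorem follows.
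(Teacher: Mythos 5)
Your reduction is genuinely different from the paper's: the paper reduces from a SAT variant in which every literal occurs in exactly two clauses, building clause edges whose cost drops to zero at load $2$, whereas you reduce from a Steiner connectivity problem by choosing $c_e(\ell)=w_e/\ell$ so that the total system cost $\sum_e \ell_e\, c_e(\ell_e)$ collapses to the weight of the \emph{support} of the assignment. That core idea is sound and arguably more illuminating than the paper's construction, since it identifies the system-optimum problem for this natural cost function with a minimum-weight connectivity problem.

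There is, however, one concrete gap: STA as defined in this paper lives on \emph{directed} graphs, while Minimum Steiner Forest is an undirected problem, and your equivalence breaks when you bidirect the instance. If an undirected edge $\{u,v\}$ becomes two arcs with independent loads, two agents traversing it in opposite directions pay $2w_e$ in the STA objective but only $w_e$ in the Steiner forest objective. Concretely, on the path $a\text{--}b\text{--}c$ with pairs $(a,c)$ and $(c,a)$, the minimum Steiner forest has weight $w_{ab}+w_{bc}$ but the minimum STA system cost is $2(w_{ab}+w_{bc})$, so your claimed identity ``optimal STA cost $=$ optimal Steiner forest weight'' fails and the reduction as written does not decide Steiner Forest. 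The fix is easy and preserves your idea: either reduce from the \emph{Directed} Steiner Network problem (NP-hard, as it generalizes Directed Steiner Tree and hence Set Cover), for which your argument goes through verbatim, or reduce from undirected Steiner Tree by using terminal pairs $(r,t_1),\dots,(r,t_k)$ with a common source $r$; in the latter case every agent's path in an optimal solution can be oriented away from $r$ inside a tree, so no edge is ever used in both directions and the exact cost correspondence is restored. With that repair, and noting that edges of load $0$ contribute nothing while the congestion-game definition of load (each agent contributes at most $1$ per edge, since strategies are edge sets) makes the bookkeeping exact, your proof is correct.
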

  \begin{proof}
    We reduce from a SAT variant, where each literal is used in exactly two different clauses.
    A SAT instance has clauses $C=\{C_1,\dots,C_m\}$ and variables $V=\{v_1,\dots,v_n\}$.
    For every clause $C_i$, we create one edge $c_i$ and one O-D pair that has to use only that edge.
    Each such edge should have cost $3n$ for less then $2$ flow and zero cost for at least $2$ flow.
    For every variable $v_i$, we create one O-D pair $x_i$ with unique origin $s_i$ and destination $t_i$.
    It can use one of two paths, one for each literal $v_i$ and $\lnot v_i$, constructed as follows;
    Let the two clauses that use our literal be $C_i,C_j$.
    Then, the literal's path moves from $s_i$ to $c_i$ to $c_j$ and finally to $t_i$.
    We assign each of the three edges in this path not associated to a clause cost $1$ independent of the flow.
    If one clause $C_i$ or $C_j$ uses both literals $v_i$ and $\lnot v_i$, ensure that either both literal's paths use it first or both use it second to avoid unintentional shortcuts.
    Now each clause $C_i$ has to be satisfied to avoid the high cost of $3n$ caused by $c_i$ and its corresponding O-D pair.
    Each variable can only choose one literal to be true and that literal has to follow the intended path over exactly its two clauses to maintain the minimum possible cost of $3$ induced by connecting edges.
    Thus, asking whether a synergistic traffic assignment with cost at most $3n$ exists is equivalent to asking whether the given SAT instance is satisfiable.
  \end{proof}

\end{document}